 \newcommand{\lto}{\lTeXto}
 \newcommand{\ruto}{\ruTeXto}
 \newcommand{\luto}{\luTeXto}
 \newcommand{\ldto}{\ldTeXto}
\newtheorem{theor}{Theorem}
\theoremstyle{definition}
\newtheorem{state}[theor]{Proposition}
\newtheorem{lemma}[theor]{Lemma}
\newtheorem{cor}[theor]{Corollary}
\newtheorem*{defNo}{Definition}
\newtheorem{example}{Example}
\theoremstyle{remark}
\newtheorem{rem}{Remark}
\newtheorem*{remNo}{Remark}
\newcommand{\cEv}{\partial}    
\newcommand{\BBR}{\mathbb{R}}\newcommand{\BBC}{\mathbb{C}}
\newcommand{\cE}{\mathcal{E}}
\newcommand{\cEL}{\mathcal{E}_{\IL}}
\newcommand{\cEToda}{{\cE}_{\text{\textup{Toda}}}}
\newcommand{\cH}{\mathcal{H}}
\newcommand{\cL}{\mathcal{L}}
\newcommand{\cU}{\mathcal{U}}
\newcommand{\bu}{{\boldsymbol{u}}}
\newcommand{\bphi}{{\boldsymbol{\phi}}}
\newcommand{\bE}{\mathbf{E}}
\newcommand{\bun}{\mathbf{1}}
\newcommand{\gm}{\mathfrak{m}}
\newcommand{\gothg}{\mathfrak{g}}
\newcommand{\gA}{\mathfrak{A}}
\newcommand{\gB}{\mathfrak{B}}
\newcommand{\veps}{\varepsilon}
\newcommand{\vph}{\varphi}
\newcommand{\dd}{\partial}
\newcommand{\Id}{{\mathrm d}}
\newcommand{\IL}{{\mathrm L}}
\DeclareMathOperator{\img}{im}
\DeclareMathOperator{\sym}{sym}
\DeclareMathOperator{\arcsinh}{arcsinh}
\DeclareMathOperator{\CDiff}{\mathcal{C}Diff}
\DeclareMathOperator{\ord}{ord}
\DeclareMathOperator{\ad}{ad}
\DeclareMathOperator{\Sol}{Sol}
\newcommand{\by}[1]{\textit{{#1}}}
\newcommand{\jour}[1]{\textit{{#1}}}
\newcommand{\vol}[1]{\textbf{{#1}}}
\newcommand{\book}[1]{\textrm{{#1}}}
\newcommand{\ib}[3]{ \{\!\{ {#1},{#2} \}\!\}_{{#3}} }
\title
{Symmetry algebras of Lagrangian Liouville\/-\/type systems}
\date{February 26, 2009; in final form May 18, 2009} 
\author
{Arthemy V. Kiselev,
\thanks{
        \textit{Address}:
Mathematical Institute, University of Utrecht, P.O.Box 80.010, 3508 TA Utrecht, The Netherlands.
\textit{E-mails}: [\texttt{A.V.Kiselev},
\texttt{J.W.vandeLeur}]\texttt{\symbol{"40}uu.nl}%
}
\quad %
Johan W. van de Leur${}^{*}$}
\begin{document}                                      
\maketitle

\begin{abstract}
The generators and 
commutation relations are calculated explicitly for higher symmetry algebras of a class of hyperbolic Euler\/--\/Lagrange systems of Liouville type (in particular, for 2D~Toda chains associated with semi\/-\/simple complex Lie algebras).

\noindent%
\textbf{Key words:} {Symmetries, 2D~Toda chains, Liouville\/-\/type systems,
   Hamiltonian hierarchies, brackets}.
\end{abstract}

\paragraph*{Introduction.}
We give a 
description of the generators and relations in
higher symmetry algebras for a class
of Darboux\/-\/integrable hyperbolic Euler\/--\/Lagrange systems
of Liouville type \cite{LeznovSmirnovShabat,ShabatYamilov,SokolovUMN}. 
There exist many non\/-\/equivalent definitions of this
type of PDEs \cite{LeznovSmirnovShabat,SokolovUMN,GurievaZhiber}; 
we investigate the systems $\cEL$ that admit as many first
integrals of the characteristic equations $D_y(w)\doteq0$ 
and $D_x(\bar{w})\doteq0$ on $\cEL$ as there are unknown functions.
The 2D~Toda chains $\bu_{xy}=\exp(K\bu)$
associated with semi\/-\/simple complex Lie algebras are the most well studied example of such equations \cite{LeznovSmirnovShabat,ShabatYamilov,Leznov,Meshkov198x,Shabat95}. 
The systems of this class
are known to possess higher symmetries $\vph=\square({\bphi})$
that depend on free functional parameters 
$\smash{\bphi}=
{}^t\bigl(\phi_1(x,[w])$,\ $\ldots$,\ $\phi_r(x,[w])\bigr)$ and
belong to the image of matrix total differential operators $\square$ 
(linear operators in total derivatives)
\cite{Meshkov198x,Demskoi2004,TMPhGallipoli,SokStar}.
The existence of such operators $\square$ for Liouville\/-\/type systems was
observed in \cite{LeznovSmirnovShabat,Meshkov198x} 
and 
\cite{SokolovUMN,SokStar}, where the importance of the linearizations
$\smash{\ell_w^{(u)}}$ of the first integrals $w$ in the construction of $\square$
was revealed. In the paper \cite{TMPhGallipoli} we proved that the
additional assumption for $\cEL$ be Euler\/--\/Lagrange strengthens known
results and even makes the description of $\square$ explicit, see
formula \eqref{Square} below.
 
In this paper 
we establish the transformation rules for the 
operators $\square$ under unrelated reparametrizations
of the coordinates in their domains and images. 
We show that, under natural assumptions on the geometry of~$\cEL$,
the images of these operators are closed with respect to the commutation, whence the Lie algebra structure on their domains appears. 
We calculate the brackets on the domains explicitly, which yields, by
the push forward of the Lie algebra structure, 
the commutation relations in the symmetry algebras~$\sym\cEL$.
To do this, we introduce auxiliary Hamiltonian operators which have the same domain as~$\square$. 

\begin{remNo}
We do not assume the presence of a symmetry $x\leftrightarrow y$ in $\cEL$.
We work with `the $x$-\/half' of the algebra $\sym\cEL$ related to the first integrals
$w^i\in\ker D_y{\bigr|}_{\cEL}$; the reasonings hold for the respective 
`$y$-\/half' of $\sym\cEL$, and the two subalgebras commute between each other.
For the Euler\/--\/Lagrange systems $\cEL$ at hand, the integrals
$\bar{w}^{\bar{\imath}}\in\ker D_x{\bigr|}_{\cEL}$ are not used in the proofs,
unlike in \cite{SokStar} for arbitrary Liouville\/-\/type systems. 

The full list of assumptions on the systems~$\cEL$ and their integrals
is given in our main Theorem~\ref{IspHO}, see also Remark~\ref{RemProofX} 
on p.~\pageref{RemProofX}.    
However, the reasonings in section~1 
hold under less restrictive conditions. In particular, 
the number 
of first integrals $w^1$,\ $\ldots$,\ $w^r$
for the characteristic equation on $\cEL$ can be less than the number 
of the unknowns $u^1$,\ $\ldots$,\ $u^m$
in $\cEL$. In that case, the auxiliary $(r\times r)$-\/matrix operators $\smash{\hat{A}_k}$ defined in \eqref{Quattro} become smaller in size
but remain Hamiltonian (see \cite{TMPhGallipoli} for the second Poisson structure for KdV provided by the 2D~Toda chains with a unique
integral). 
\end{remNo}

The paper is organized as follows. First we define the operators $\square$
that determine symmetry generators for the systems $\cEL$ and
introduce auxiliary Hamiltonian operators. Here 
we re\/-\/derive the higher Poisson structures for the Drinfel'd\/--\/Sokolov 
hierarchies \cite{DSViniti84} on 2D~Toda chains related to semi\/-\/simple
complex Lie algebras; an example 
is given for the $\mathsf{A}_2$-\/Toda chain. 
Then in section~2 
we establish the commutation closure for images of the operators~$\square$ and
calculate the structural 
relations in the 
algebras $\sym\cEL$; 
an illustration is given for the Kaup\/--\/Boussinesq equation.
Finally, in section~3 
we discuss some properties of the operators that yield symmetries
of non\/-\/Lagrangian Liouville\/-\/type systems.

All notions and constructions from geometry of PDE are standard \cite{ClassSym,
Olver,Opava}. We follow the notation of \cite{TMPhGallipoli,TMPhGardner,Protaras}. 
This paper develops further the concept of~\cite{TMPhGallipoli}.

\paragraph{1. Symmetry generators for $\cEL$.}\indent\par\noindent\label{SecSym}
\begin{defNo}
A \emph{Liouville\/-\/type system}
$\cE$ is a system $\{\bu_{xy}=f(\bu,\bu_x,\bu_y;x,y)\}$ of $m$ 
hyperbolic equations 
upon $\bu=(u^1,\ldots,u^m)$ which admits nontrivial \emph{first integrals} 
\[
w^1,\ \ldots,\ w^r\in\ker D_y{\bigr|}_{\cE};\quad 
\bar{w}^1,\ \ldots,\ \bar{w}^{\bar{r}}\in\ker D_x{\bigr|}_{\cE},\qquad
0<r,\bar{r}\leq m,
\]   
for the linear first order characteristic equations
${D_y\bigr|}_{\cE}(w^i)\doteq0$ 
and ${D_x\bigr|}_{\cE}(\bar{w}^{\bar{\jmath}})\doteq0$ 
that hold by virtue ($\doteq$) of $\cE$.
\end{defNo}     

\begin{example}\label{ExTodaAreLiou}
In \cite{ShabatYamilov} it was proved that the 2D~Toda chains \cite{Leznov}
$u^i_{xy}=\exp(K^i_{\,j}u^j)$ related to semi\/-\/simple complex Lie
algebras with the Cartan matrices $K$ admit maximal ($r=\bar{r}=m$) sets of the
integrals. Various methods for reconstruction of $w^i$, 
$\bar{w}^{\bar{\jmath}}$ for these
exponential\/-\/nonlinear Toda chains were proposed 
in \cite{SokolovUMN,GurievaZhiber,Shabat95}. 
The differential orders (after a shift by $-1$) of the integrals $w^1$, 
$\ldots$, $w^r$ w.r.t.\ $\bu$ are equal to the exponents of the corresponding 
semi\/-\/simple complex Lie algebras of rank $r$, 
which follows from \cite[p.\,21]{ShabatYamilov}.

For instance, in the sequel we
consider the Euler\/--\/Lagrange 2D~Toda system $\cEToda$
associated with the simple Lie
algebra $\mathfrak{sl}_3(\BBC)$, 
see \cite{LeznovSmirnovShabat,Leznov,Shabat95},
\begin{equation}\label{A2Toda}
\cEToda=\bigl\{ u_{xy}=\exp(2u-v),\ v_{xy}=\exp(-u+2v),\qquad
K=\left(\begin{smallmatrix}\phantom{+}2&-1\\
 -1&\phantom{+}2\end{smallmatrix}\right)\bigr\}.
\end{equation}
The 
integrals of respective orders $2$ and $3$ for system \eqref{A2Toda} are
(e.g., see~\cite{Ibragimov})
$w^1 
 =u_{xx}+v_{xx}-u_x^2+u_xv_x-v_x^2$ and 
$w^2 
=u_{xxx}-2u_xu_{xx}+u_xv_{xx}+u_x^2v_x-u_xv_x^2$.
\end{example}

The generators $\vph=\square\bigl(\smash{{\bphi}}(x,[w])\bigr)$ of higher
symmetry algebras for Liouville\/-\/type equations are given by 
matrix total differential operators $\square$, see 
\cite{SokolovUMN,Meshkov198x}. 
For Euler\/--\/Lagrange Liouville\/-\/type systems
$\cEL=\{F\equiv\bE(\cL)=0\}$, 
see \cite{Demskoi2004,TMPhGallipoli,Startsev2006},
the existence of \emph{certain}
factorizations 
for at least a \emph{part} of symmetries
is rigorous and can be readily seen as follows.
For integrals $w$ such that $D_y(w)=\nabla(F)$ vanishes on the differential
ideal $\{F=0\}^\infty$ by virtue of an operator $\nabla$, 
and for any $I(x,[w])$,
the generating section $\psi_I=\Bigl[\nabla^*\circ\bigl(\ell_w^{(u)}\bigr)^*
\circ\bigl(\ell_I^{(w)}\bigr)^*\Bigr](1)$ for a conservation law
$\int I\,\Id x$ solves the equations $\ell_{\bE(\cL)}^*(\psi_I)\doteq0$
on $\cEL$, see \cite{ClassSym,Olver,Opava}. 
The Helmholtz condition $\ell_{\bE(\cL)}=\ell^*_{\bE(\cL)}$ 
for the linearization (the Frech\'et derivative) implies that the vector
\begin{equation}\label{SquareForEL}
\vph[\bphi]=\Bigl[\nabla^*\circ\bigl(\ell_w^{(u)}\bigr)^*\Bigr]\bigl(
\bphi(x,[w])\bigr)\in\ker\ell_{\bE(\cL)}\bigr|_{\cEL}
\end{equation}
is a symmetry
of $\cEL$ for any $\bphi=\bigl(\ell_I^{(w)}\bigr)^*(1)=\bE_w(I\,\Id x)$.
A standard homological reasoning 
(see \cite[Ch. 5]{Olver} 
or \cite[\S7.8]{Opava})
shows that formula \eqref{SquareForEL}
yields symmetries of the 
sys\-tem $\cEL$ even if sections $\bphi$
do not belong to the image of the variational derivative $\bE_w$ w.r.t.~$w$.

In this section we recall the construction of 
operators $\square$ that determine 
symmetries for a class of Euler\/--\/Lagrange Liouville\/-\/type systems.
We suppose that the integrals $w$ are \emph{minimal}, meaning 
$I\in\ker D_y\bigr|_{\cEL}$ implies $I=I(x,[w])$.

\begin{state}[\cite{TMPhGallipoli}]\label{NoetherSymTh}
Let $\kappa$ be an invertible 
symmetric constant real $(m\times
m)$-\/matrix. Suppose that $\cL=\int L\,\Id x\Id y$ with the density
$L=-\tfrac{1}{2}\sum_{i,j}\kappa_{ij}u^i_xu^j_y-H_{\IL}(u;x,y)$ is the
Lagrangian of a Liouville\/-\/type equation $\cEL=\{\bE(\cL)=0\}$.
Let $\gm=\dd L/\dd u_y$ be the momenta 
and $w(\gm)=(w^1,\ldots,w^r)$ be the minimal set of integrals for $\cEL$
that belong to the kernel of ${D_y\bigr|}_{\cEL}$.
Then the adjoint linearization
\begin{equation}\label{Square}
\square=\bigl(\ell_w^{(\gm)}\bigr)^*
\end{equation}
of the integrals w.r.t.\ the momenta yields 
Noether symmetries $\vph_{\cL}$ of $\cEL$: 
\begin{equation}\label{NoetherSym}
\vph_\cL = \square\bigl({\delta\cH}/{\delta w}\bigr)
\qquad\smash{\text{for any $\cH=\int H(x,[w])\,\Id x$.}}
\end{equation}
\end{state}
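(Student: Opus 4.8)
The plan is to verify that the section $\vph_\cL=\square(\delta\cH/\delta w)$ falls under the already\/-\/established sufficient condition \eqref{SquareForEL}, so that it is automatically a symmetry of $\cEL$. The key structural input is the special Lagrangian density $L=-\tfrac{1}{2}\sum_{i,j}\kappa_{ij}u^i_xu^j_y-H_{\IL}(u;x,y)$: first I would compute the momenta $\gm=\dd L/\dd u_y=-\tfrac{1}{2}\kappa\,\bu_x$ (up to the symmetric normalization), so that $\gm$ is, modulo the invertible constant matrix $\kappa$, just $\bu_x$. The point of introducing $\gm$ rather than $\bu$ itself is that the Euler\/--\/Lagrange operator $\bE(\cL)$ for this density is $\kappa\bu_{xy}+\dd H_\IL/\dd\bu$, whose linearization is self\/-\/adjoint by the Helmholtz condition, and the momenta are precisely the variables adapted to writing $D_y(w)$ as the image of a suitable operator.

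Next I would make the factorization explicit. Because $w=w(\gm)$ depends on the jets of $\gm$, the chain rule for linearizations gives $\ell_w^{(u)}=\ell_w^{(\gm)}\circ\ell_\gm^{(u)}$, and since $\gm=-\tfrac12\kappa\bu_x$ one has $\ell_\gm^{(u)}=-\tfrac12\kappa\,D_x$. Taking adjoints, $\bigl(\ell_w^{(u)}\bigr)^*=\bigl(\ell_\gm^{(u)}\bigr)^*\circ\bigl(\ell_w^{(\gm)}\bigr)^*=\tfrac12 D_x\circ\kappa\circ\square$. I would then identify the operator $\nabla$ from \eqref{SquareForEL}: the defining relation $D_y(w)=\nabla(F)$ on the ideal $\{F=0\}^\infty$ expresses the fact that the integral $w$ lies in $\ker D_y\bigr|_{\cEL}$, and one reads off $\nabla$ from how $D_y(w)$ is rewritten through $F=\bE(\cL)$. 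The self\/-\/adjointness $\ell_{\bE(\cL)}=\ell^*_{\bE(\cL)}$ is what lets the adjoint $\nabla^*$ re\/-\/enter on the symmetry side, exactly as in the derivation of \eqref{SquareForEL}.

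With these two ingredients I would then match $\square=\bigl(\ell_w^{(\gm)}\bigr)^*$ against the composite $\nabla^*\circ\bigl(\ell_w^{(u)}\bigr)^*$ appearing in \eqref{SquareForEL}: the operator $\nabla^*$ should absorb exactly the extra factor $\tfrac12 D_x\kappa$ coming from the chain rule, so that feeding $\bphi=\delta\cH/\delta w=\bE_w(H\,\Id x)$ into $\square$ reproduces a section of the form \eqref{SquareForEL}. Since \eqref{SquareForEL} is known to lie in $\ker\ell_{\bE(\cL)}\bigr|_{\cEL}$, the resulting $\vph_\cL$ is a symmetry; the homological remark after \eqref{SquareForEL} then removes the need for $\bphi$ to actually be a variational derivative, giving symmetries for all $\bphi(x,[w])$. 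The main obstacle I expect is the bookkeeping in the chain\/-\/rule/adjoint computation that relates $\nabla$, $\ell_\gm^{(u)}$, and $\square$ while respecting the fact that these identities hold only by virtue of (\,$\doteq$\,) the equation $\cEL$; in particular, one must check that the restriction to $\{F=0\}^\infty$ is compatible at each step, so that the operator $\nabla$ really is the one implicitly defined by $D_y(w)=\nabla(F)$ and that the minimality assumption on the integrals $w$ guarantees $\square$ acts on the correct space $\bphi(x,[w])$.
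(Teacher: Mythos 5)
Your overall route --- identify $\vph_\cL$ with the generating section of the conservation law $\int H\,\Id x$, use the Helmholtz condition $\ell_{\bE(\cL)}=\ell^*_{\bE(\cL)}$ to turn solutions of the adjoint linearized equation into symmetries, and invoke the homological remark to relax the requirement $\bphi\in\img\bE_w$ --- is exactly the argument the paper sketches around \eqref{SquareForEL} (the Proposition itself is imported from \cite{TMPhGallipoli} without a separate proof). The gap is in your matching step. You correctly obtain $\bigl(\ell_w^{(u)}\bigr)^*=\bigl(\ell_\gm^{(u)}\bigr)^*\circ\square$ with $\bigl(\ell_\gm^{(u)}\bigr)^*=\tfrac12 D_x\circ\kappa$, and you then ask $\nabla^*$ to ``absorb'' this factor so that $\nabla^*\circ\bigl(\ell_w^{(u)}\bigr)^*$ reduces to $\square$ on the relevant arguments. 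That would force $\nabla^*$ to act as a left inverse of $\tfrac12 D_x\circ\kappa$, i.e.\ to contain $D_x^{-1}$; but $\nabla$ is by definition a total \emph{differential} operator (it is read off from the identity $D_y(w)=\nabla(F)$), so no such cancellation exists. The operators $\bigl(\ell_w^{(u)}\bigr)^*$ and $\square=\bigl(\ell_w^{(\gm)}\bigr)^*$ differ by a genuine factor of $D_x$, and the differential orders simply do not match: the composite you aim for is of strictly higher order than $\square$.

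The correct identification goes the other way round, and this is where the special form of $L$ and the integral property of $w$ actually enter. Write $D_y\bigl(w[\gm]\bigr)=\ell_w^{(\gm)}\bigl(D_y\gm\bigr)$ and substitute $D_y\gm=-\tfrac12\kappa\,\bu_{xy}=-\tfrac12\bigl(F+\dd H_{\IL}/\dd\bu\bigr)$; the residual term $\ell_w^{(\gm)}\bigl(\dd H_{\IL}/\dd\bu\bigr)$ vanishes identically precisely because $w\in\ker D_y|_{\cEL}$, whence $\nabla=-\tfrac12\,\ell_w^{(\gm)}$ and therefore $\nabla^*=-\tfrac12\,\square$ \emph{directly}, with no factor $\bigl(\ell_w^{(u)}\bigr)^*$ intervening. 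The generating section of $\int H(x,[w])\,\Id x$ is then $\bigl(\ell_H^{(w)}\circ\nabla\bigr)^*(1)=\nabla^*\bigl(\delta\cH/\delta w\bigr)=-\tfrac12\,\square\bigl(\delta\cH/\delta w\bigr)$, which is \eqref{NoetherSym} up to an inessential constant, and the Helmholtz step finishes the proof. (On the scalar Liouville equation with $w=u_x^2-u_{xx}$ one has $\nabla^*=2u_x+D_x$ proportional to $\square$, while the third\/-\/order composite $\nabla^*\circ\bigl(\ell_w^{(u)}\bigr)^*$ evaluated at $\phi=1$ is not a symmetry --- a useful sanity check. Reading \eqref{SquareForEL} literally, with the factor $\bigl(\ell_w^{(u)}\bigr)^*$ inside, produces the same order mismatch; the operative identity is $\psi_I=\nabla^*\circ\bigl(\ell_I^{(w)}\bigr)^*(1)$.) The remaining ingredients of your proposal --- the computation of the momenta, the self\/-\/adjointness argument, and the extension to arbitrary $\bphi(x,[w])$ --- are sound and coincide with the paper's reasoning.
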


\begin{cor}\label{SymStructure}
Under the assumptions and notation of Proposition~\textup{\ref{NoetherSymTh},}
the section 
\begin{equation}\label{SymForLiou}
\vph=\square\bigl({\bphi}(x,[w])\bigr) 
\end{equation}
is a 
symmetry of the Liouville\/-\/type equation $\cEL$ for any $r$-\/tuple
${\bphi}={}^t(\phi_1,\ldots,\phi_r)$. 
\end{cor}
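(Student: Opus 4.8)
The plan is to verify the symmetry criterion $\ell_{\bE(\cL)}(\vph)\doteq0$ on $\cEL$ directly for $\vph=\square(\bphi)$, exploiting one structural fact that holds for \emph{every} $r$-tuple $\bphi=\bphi(x,[w])$: since the components $w^i$ are integrals, $D_y(w^i)\doteq0$, and hence
\[
D_y\bigl(\bphi(x,[w])\bigr)\doteq0\qquad\text{on }\cEL .
\]
This is exactly the ingredient that makes the restriction ``$\bphi\in\img\bE_w$'' of Proposition~\ref{NoetherSymTh} unnecessary: the verification I have in mind uses nothing about $\bphi$ being a variational derivative, only that $D_y\bphi\doteq0$. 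So the corollary follows once the criterion is checked modulo this relation, and the job reduces to an operator identity independent of $\bphi$.

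Next I would write the linearization explicitly. From $L=-\tfrac12\sum\kappa_{ij}u^i_xu^j_y-H_{\IL}$ one gets $\bE(\cL)=\kappa\,\bu_{xy}-\partial H_{\IL}/\partial\bu$, so by the Helmholtz property $\ell_{\bE(\cL)}=\ell_{\bE(\cL)}^*=\kappa\,D_xD_y-Q$, where $Q=\partial^2H_{\IL}/\partial\bu\,\partial\bu$ is the zeroth-order Hessian. Applying this to $\vph=\square(\bphi)$ reduces everything to computing $D_y(\square\bphi)$. The operator $\square=(\ell_w^{(\gm)})^*$ has coefficients depending on the momenta $\gm=-\tfrac12\kappa\bu_x$ and their $x$-derivatives; commuting $D_y$ through it produces $\square(D_y\bphi)$ plus a commutator term. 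The first summand is $\doteq0$, and the second, on $\cEL$ where $D_y\gm\doteq-\tfrac12\,\partial H_{\IL}/\partial\bu$, is a differential operator $\beta^{*}$ in $\bphi$ whose coefficients are built from the second $\gm$-derivatives of the integrals (the Hessian of $w$ in the momenta) contracted against $\partial H_{\IL}/\partial\bu$. Thus $D_y(\square\bphi)\doteq\beta^{*}(\bphi)$.

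Finally I would apply the remaining $D_x$ and compare terms. Since $\cEL$ is $D_x$-stable, $\ell_{\bE(\cL)}(\square\bphi)\doteq\kappa\,D_x\bigl(\beta^{*}\bphi\bigr)-Q\,\square\bphi$, so the whole statement comes down to the operator identity
\[
\kappa\,D_x\circ\beta^{*}\;\doteq\;Q\circ\square\qquad\text{on }\cEL,
\]
after which the two terms cancel and $\square(\bphi)\in\ker\ell_{\bE(\cL)}|_{\cEL}=\sym\cEL$ for all $\bphi$. This identity is where the real work sits, and I expect it to be the main obstacle: it must be extracted from the integral condition $\ell_w^{(\gm)}(\partial H_{\IL}/\partial\bu)\doteq0$ (the $\gm$-form of $D_y(w)\doteq0$) together with its first $\gm$-variation, packaged with the self-adjointness of $\ell_{\bE(\cL)}$, and it is the point at which the auxiliary Hamiltonian operators of section~1 enter. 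As a sanity check and a model for the general computation, in the scalar Liouville case $u_{xy}=\exp(u)$, $w=u_{xx}-\tfrac12u_x^2$ one finds $\square=2(D_x+u_x)$ and $\beta^{*}=2\exp(u)$, and the identity holds on the nose.

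An alternative, following the cited references, is to keep the two-step structure of the paper: Proposition~\ref{NoetherSymTh} already gives $\ell_{\bE(\cL)}\circ\square$ vanishing on $\img\bE_w$, and a standard homological argument in the variational complex (\cite[Ch.~5]{Olver}, \cite[\S7.8]{Opava}), again combined with $D_y\bphi\doteq0$, is meant to upgrade this to all $\bphi(x,[w])$. The caveat is that the naive pairing only shows $(\ell_{\bE(\cL)}\circ\square)^{*}$ factors through $D_x$, which is weaker than what is needed; so either way the genuine content is the displayed operator identity. The direct route has the advantage of never requiring a characterization of the image of the variational derivative $\bE_w$.
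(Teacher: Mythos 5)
Your direct computation is set up correctly as far as it goes, but it does not prove the corollary: everything is funnelled into the operator identity $\kappa\,D_x\circ\beta^{*}\doteq Q\circ\square$ on $\cEL$, which you yourself flag as ``the main obstacle'' and then leave unestablished except for one scalar example. Since you have shown that $\ell_{\bE(\cL)}(\square\bphi)\doteq\bigl(\kappa\,D_x\circ\beta^{*}-Q\circ\square\bigr)(\bphi)$, that identity is essentially \emph{equivalent} to the statement being proved, so deferring it is a genuine gap rather than a routine verification. Your indication of where it should come from --- the first $\gm$-variation of the integral condition $\ell_w^{(\gm)}(\dd H_{\IL}/\dd\bu)\doteq0$ combined with self-adjointness of $\ell_{\bE(\cL)}$ --- is plausible, but that computation \emph{is} the proof on your route, and it is absent.

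The paper closes the gap by a soft argument that you considered and discarded too quickly in your last paragraph. The key observation you miss is that the image of the variational derivative $\bE_w$ (taken on the jet bundle over $\xi$ with fibre coordinates $w^1,\ldots,w^r$) already contains every covector whose components $\phi_i(x)$ depend on the base variable $x$ alone, because $\bphi(x)=\bE_w\bigl(\int\sum_i\phi_i(x)\,w^i\,\Id x\bigr)$; Proposition~\ref{NoetherSymTh} therefore covers all such $\bphi$. Since $\square$ and the symmetry\/-\/determining operator are total differential operators, whose action on differential functions is defined through restriction to jets $j_\infty(s)$ of sections, an arbitrary argument $\phi_i\bigl(x,[w[\gm[s(x)]]]\bigr)$ becomes, on each such jet, a function of $x$ only, and the already\/-\/proved case applies section by section (the substitution principle). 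No homological upgrade and no pairing argument is invoked, so your caveat about $(\ell_{\bE(\cL)}\circ\square)^{*}$ merely factoring through $D_x$ does not arise. If you wish to keep the direct route, you must actually derive the displayed operator identity from $D_y(w[\gm])\doteq0$ in general; as it stands, your proposal establishes the statement only for the scalar Liouville equation.
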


\begin{proof}
Consider the jet bundle $J^\infty(\xi)$ over the fibre bundle
$
\xi\colon\BBR^r\times\BBR\to\BBR
$ 
with the base $\BBR\ni x$
and the fibres $\BBR^r$ with coordinates $w^1,\ldots,w^r$.
By Proposition~\ref{NoetherSymTh}, the statement 
is valid for any $\smash{{\bphi}}$ 
in the image of the variational derivative $\bE_w$. 
Obviously, its image contains all variational covectors 
$\smash{{\bphi}}$ whose components $\phi_i(x)\in
C^\infty(\BBR)$ are functions on the base of the new 
bundle $\xi$. The prolongation of the 
substitution $w=w\bigl[\gm[u]\bigr]\colon J^\infty(\pi)\to\Gamma(\xi)$ 
converts the components of sections $\smash{{\bphi}}$ 
to 
smooth differential functions in $u$ (which denotes the set of fibre
coordinates in the bundle $\pi$ over the same base $\BBR\ni x$).
Now recall that $\square$ is an operator in total
derivatives $D_x$ 
whose action on 
differential functions $f[u]$ 
is defined by 
$j_\infty(s)\bigl(D_x(f)\bigr)\mathrel{{:}{=}}\frac{\dd}{\dd x}\bigl(
j_\infty(s)(f)\bigr)$
through the restrictions $j_\infty(s)(f)$ of $f$
onto the jets $j_\infty(s)$ of sections $u=s(x)$. 
Hence we obtain $\phi_i(x)=\phi_i\bigl(x,\bigl[w[\gm[s(x)]]\bigr]\bigr)$,
and the assertion follows.
\end{proof}

\begin{remNo}
This proof combines taking the variational derivatives with respect to $w$
on one jet space with calculating the total derivatives of differential
functions in $u$ on the other jet space over the same base.
Whenever the two bundles coincide, the reasoning amounts to the definition
of $D_x$. Then it is called the \emph{substitution principle}
(\cite{Olver}, see a detailed discussion in \cite{Opava}).
\end{remNo}

\begin{theor}\label{Kind2}
Under differential reparametrizations $\tilde{w}=\tilde{w}[w]$ and
$\tilde{u}=\tilde{u}[u]$ of the coordinates $w^1,\ldots,w^r$ 
and $u^1,\ldots,u^m$ in the infinite jet bundles over $\xi$ and $\pi$ that
specify its domain and image, respectively, 
the operator $\square$ is transformed according to the formula
\begin{equation}\label{FrobAK}   
\square\mapsto\tilde{\square}=\ell_{\tilde{u}}^{(u)}\circ \square\circ
 \bigl(\ell_{\tilde{w}}^{(w)}\bigr)^*
  {\Bigr|_{\substack{w=w[u]\\u=u[\tilde{u}]}}}.
\end{equation}
\end{theor}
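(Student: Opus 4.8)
The plan is to reduce the transformation law \eqref{FrobAK} to two elementary rules for the Fréchet derivative and to apply them separately on the domain and on the image of $\square=(\ell_w^{(\gm)})^*$. The rules are the chain rule $\ell_{f}^{(\cdot)}=\ell_{f}^{(g)}\circ\ell_{g}^{(\cdot)}$ for a differential function $f=f[g]$ of an intermediate argument $g$, and the reversal $(A\circ B)^*=B^*\circ A^*$ of adjoints under composition. First I would treat the reparametrization $\tilde w=\tilde w[w]$ of the domain. Since the new integrals are $\tilde w=\tilde w[w(\gm)]$, the chain rule gives $\ell_{\tilde w}^{(\gm)}=\ell_{\tilde w}^{(w)}\circ\ell_{w}^{(\gm)}$, and passing to adjoints yields $(\ell_{\tilde w}^{(\gm)})^*=(\ell_{w}^{(\gm)})^*\circ(\ell_{\tilde w}^{(w)})^*=\square\circ(\ell_{\tilde w}^{(w)})^*$. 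This accounts for the right factor $(\ell_{\tilde w}^{(w)})^*$ in \eqref{FrobAK}; its adjoint placement reflects that $w$ labels the covector domain of $\square$, so that a change of the $w$-coordinates enters by precomposition.

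The heart of the matter, and the step I expect to be the main obstacle, is the reparametrization $\tilde u=\tilde u[u]$ of the image, because it forces me to control how the momenta $\gm=\dd L/\dd u_y$ respond to a differential change of the fields. The claim I would establish is that the momenta transform contragradiently, $\ell_{\gm}^{(\tilde\gm)}=(\ell_{\tilde u}^{(u)})^*$; for a point change $\tilde u=\tilde u(u)$ this is the familiar $\gm=(\dd\tilde u/\dd u)^{t}\tilde\gm$, and in general it encodes the invariance of the canonical pairing $\int\langle\gm,\delta u\rangle$ --- equivalently, of the presymplectic structure attached to $\cL$ --- under the substitution $\delta\tilde u=\ell_{\tilde u}^{(u)}(\delta u)$. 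Granting this, viewing $\tilde w$ as a function of the new momenta through the old ones and using the chain rule once more gives $\ell_{\tilde w}^{(\tilde\gm)}=\ell_{\tilde w}^{(\gm)}\circ(\ell_{\tilde u}^{(u)})^*$, whose adjoint is $(\ell_{\tilde w}^{(\tilde\gm)})^*=\ell_{\tilde u}^{(u)}\circ(\ell_{\tilde w}^{(\gm)})^*$.

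Finally I would assemble the two halves. In the new charts the operator is by definition $\tilde\square=(\ell_{\tilde w}^{(\tilde\gm)})^*$, so chaining the last identity with the result of the first step gives $\tilde\square=\ell_{\tilde u}^{(u)}\circ\square\circ(\ell_{\tilde w}^{(w)})^*$; re-expressing the coefficients of this composite in the intended coordinates, i.e.\ substituting $w=w[u]$ and $u=u[\tilde u]$, reproduces \eqref{FrobAK} verbatim. Apart from the contragradience of the momenta, the remaining points are bookkeeping: checking that the order-by-order composition of these total differential operators is well defined and that the restriction $|_{w=w[u],\,u=u[\tilde u]}$ is inserted so that the domain and image coefficients are evaluated in the correct variables. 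I expect no genuine difficulty there, so the argument stands or falls with the contragradient transformation law for $\gm$, which I would derive from the variational (cotangent) structure underlying \cite{Olver,Opava}.
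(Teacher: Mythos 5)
Your argument is correct in outline and arrives at \eqref{FrobAK}, but by a genuinely different route from the paper's. The paper never touches the momenta: it treats $\square$ intrinsically as a map from variational covectors on $J^\infty(\xi)$ to evolutionary derivations on $J^\infty(\pi)$, notes that the velocities transform by $\tilde{\vph}=\ell_{\tilde{u}}^{(u)}(\vph)$ and that the arguments $\bphi=\delta\cH/\delta w$ transform by $\bphi=\bigl(\ell_{\tilde{w}}^{(w)}\bigr)^*(\tilde{\bphi})$ --- the same contragredient rule you invoke, but applied to the covectors in the domain rather than to $\gm$ --- and then extends the rule from $\img\bE_w$ to the entire domain by repeating the substitution\/-\/principle argument of Corollary~\ref{SymStructure}. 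You instead compute $\tilde{\square}=\bigl(\ell_{\tilde{w}}^{(\tilde{\gm})}\bigr)^*$ directly from the defining formula \eqref{Square} via the chain rule for linearizations and passage to adjoints. This buys you something real: your domain step is an operator identity, so you do not need the density argument that extends the rule beyond $\img\bE_w$. But it costs you exactly the step the paper's proof avoids: you must define the transformed momenta $\tilde{\gm}$ for a genuinely \emph{differential} substitution $\tilde{u}=\tilde{u}[u]$ and prove $\ell_{\gm}^{(\tilde{\gm})}=\bigl(\ell_{\tilde{u}}^{(u)}\bigr)^*$. That is more delicate than the point\/-\/transformation case you cite: the transformed Lagrangian is no longer first order in $\tilde{u}_y$, its density is fixed only up to total divergences, and one must check that $\gm$ is in fact a differential function of $[\tilde{\gm}]$ alone, since the coefficients of $\bigl(\ell_{\tilde{u}}^{(u)}\bigr)^*$ a priori involve the undifferentiated $\tilde{u}$, which is not recoverable from $\tilde{\gm}$. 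Your appeal to the invariance of the pairing is the right idea --- it amounts to $\langle\gm,u_y\rangle=\bigl\langle\bigl(\ell_{u}^{(\tilde{u})}\bigr)^*(\gm),\tilde{u}_y\bigr\rangle$ modulo a total $x$-\/divergence, using $u_y=\ell_{u}^{(\tilde{u})}(\tilde{u}_y)$ --- but as written it is a heuristic rather than a proof, and it is precisely the point on which your version stands or falls, as you yourself observe.
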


\begin{proof}
The transformation $\tilde{\vph}=\ell_{\tilde{u}}^{(u)}(\vph)$ of the
velocities is obvious.
Under differential reparametrizations $w=w[\tilde{w}]$ of the integrals,
the sections $\bphi=\delta\cH/\delta w$ are transformed by
$\bphi=\bigl(\ell_{\tilde{w}}^{(w)}\bigr)^*(\tilde{\bphi})$,
thence $\square$ becomes well defined on $\img\bE_w$. Namely, it maps
variational covectors for the fibre bundle~$\xi$ to evolutionary derivations 
in the jet space over the other fibre bundle~$\pi$. Repeating the reasoning
used in the proof of Corollary \ref{SymStructure}, we establish the
transformation rule~\eqref{FrobAK} 
for~$\square$ on the entire domain.
\end{proof}

In Theorem~\ref{Kind2} we showed that sections in the domain of 
the operator~$\square$ are transformed by the same rule as the arguments of
Hamiltonian operators. There is indeed a deep reason for that.

The integrals $w[\gm]$ of Euler\/--\/Lagrange Liouville\/-\/type
systems $\cEL$ determine the Miura sub\-sti\-tu\-ti\-ons from commutative
modified KdV\/-\/type Hamiltonian hierarchies $\gB$ of Noe\-ther symmetries
for $\cEL$ to completely integrable KdV\/-\/type hierarchies $\gA$ of
higher symmetries of the multi\/-\/component wave
equations $\cE_\varnothing=\{s_{xy}=0\}$, see below. 
A natural example is given by the potential modified KdV equation
$u_t=-\tfrac{1}{2}u_{xxx}+u_x^3
$, which is transformed to the KdV equation
$w_t=-\tfrac{1}{2}w_{xxx}+3ww_x$ by $w=u_x^2-u_{xx}$.
This example was analysed 
in detail in \cite{TMPhGallipoli}.
The method for generating relevant differential substitutions by 
the integrals $w$ of Liouville\/-\/type systems was discovered 
in \cite{SokolovSubst}, 
see \cite{SokolovUMN} for discussion. 
This fact was used in \cite{StartsevClass} for a classification of
the first\/-\/order differential substitutions.

The hierarchies $\gA$ and $\gB$ share the Hamiltonians
$\cH_i[\gm]=\cH_i\bigl[w[\gm]\bigr]$ through the Miura substitution $w[\gm]$.
The Hamiltonian structures for the Magri schemes of $\gA$ and $\gB$ are
correlated by 
the operators $\square$, which map cosymmetries $\phi_i$
for the hierarchy $\gA$ to symmetries $\vph_{i+1}$ of the
modified hierarchy $\gB$. We stress that the first Hamiltonian operator
$\hat{B}_1=\bigl(\ell_\gm^{(u)}\bigr)^*$ for $\gB$ originates from the
differential constraint $\gm=\dd L/\dd u_y$ upon the coordinates $u$
and the momenta $\gm$ for $\cEL$.
Using the explicit form of the `junior' 
operator $\smash{\hat{B}_1}$
and the differential\/-\/functional closure in $[w]$ for the velocities
of the integrals, see \cite{SokolovUMN}, we realize the classical scheme
for generating higher Poisson structures via Miura's 
substitutions \cite{KuperWilson}.

\begin{lemma}\label{AkIsHam}
Introduce the linear differential operator
\begin{equation}\label{Quattro}
\hat{A}_k=\square^*\circ\hat{B}_1\circ\square
\end{equation}
that maps variational covectors for the jet bundle $J^\infty(\xi)$ over $\xi$
to evolutionary vector fields on it, 
$\smash{\hat{A}_k}\colon 
\Gamma\bigl(\widehat{\xi\,}\bigr)\mathbin{{\otimes}_{C^\infty(\BBR)}}
 C^\infty\bigl(J^\infty(\xi)\bigr) \to
\Gamma(\xi)\mathbin{{\otimes}_{C^\infty(\BBR)}}
 C^\infty\bigl(J^\infty(\xi)\bigr)$.
The operator \eqref{Quattro} is Hamiltonian and determines\footnote{%
In most cases, this is one of the \emph{higher} structures for $\gA$,
which is indicated by the subscript $k=k(\square,\gm)\geq2$. 
The choice of the `junior' operator $\smash{\hat{A}_1}$ for $\gA$
is discussed in what follows.}
a Poisson structure for the KdV\/-\/type hierarchy $\gA$.
The coefficients of $\smash{\hat{A}_k}$ are differential functions in $w$.
\end{lemma}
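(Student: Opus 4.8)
The plan is to recognize $\hat{A}_k$ as the push\/-\/forward of the Hamiltonian operator $\hat{B}_1$ along the Miura substitution $w=w[\gm]$ and then to run the classical Kupershmidt--Wilson mechanism. Because $\square=\bigl(\ell_w^{(\gm)}\bigr)^*$ by \eqref{Square}, its adjoint is $\square^*=\ell_w^{(\gm)}$, so that
\[
\hat{A}_k=\square^*\circ\hat{B}_1\circ\square=\ell_w^{(\gm)}\circ\hat{B}_1\circ\bigl(\ell_w^{(\gm)}\bigr)^*
\]
is precisely the congruence of $\hat{B}_1$ by the linearization of the integrals. Accordingly I would verify three properties in turn: skew\/-\/symmetry, descent of the coefficients to differential functions in $[w]$, and the Jacobi identity.

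Skew\/-\/symmetry is immediate: since $\hat{B}_1$ is Hamiltonian it is skew\/-\/adjoint, $\hat{B}_1^*=-\hat{B}_1$, whence $\hat{A}_k^*=\square^*\,\hat{B}_1^*\,\square=-\hat{A}_k$. For the Jacobi identity I would use the chain rule for variational derivatives: if $F=\int f(x,[w])\,\Id x$ is pulled back to the jet space over $\pi$ through $w=w[\gm]$, then $\delta(F\circ w)/\delta\gm=\bigl(\ell_w^{(\gm)}\bigr)^*(\delta F/\delta w)=\square(\delta F/\delta w)$. Consequently the two brackets agree under the substitution,
\[
\bigl\langle \delta F/\delta w,\ \hat{A}_k(\delta G/\delta w)\bigr\rangle
=\bigl\langle \square(\delta F/\delta w),\ \hat{B}_1\,\square(\delta G/\delta w)\bigr\rangle
=\{F\circ w,\,G\circ w\}_{\hat{B}_1},
\]
so that $w[\gm]$ is a Poisson map from the jet space over $\pi$ carrying $\hat{B}_1$ onto the jet space over $\xi$ carrying $\hat{A}_k$. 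The Jacobiator $\lshad\hat{A}_k,\hat{A}_k\rshad$, whose coefficients are differential functions in $[w]$ by the descent step below, then pulls back along $w[\gm]$ to $\lshad\hat{B}_1,\hat{B}_1\rshad=0$; since the integrals are functionally independent and their velocities $D_x(w^i)$ realise arbitrary jet values, the substitution is dominant and the pulled\/-\/back expression can vanish only if $\lshad\hat{A}_k,\hat{A}_k\rshad=0$ identically. Hence $\hat{A}_k$ is Hamiltonian.

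The remaining and genuinely Liouville\/-\/specific point is the descent, which I expect to be the main obstacle. A priori the coefficients of $\ell_w^{(\gm)}\circ\hat{B}_1\circ\bigl(\ell_w^{(\gm)}\bigr)^*$ are differential functions in $[u]$ (equivalently in $[\gm]$), and one must show that they reorganise into differential functions in $[w]$ alone. Here I would invoke the differential\/-\/functional closure in $[w]$ of the velocities of the integrals on $\cEL$, recalled just before the lemma (see \cite{SokolovUMN}): every total derivative $D_x^{\,p}(w^i)$ produced when $D_x$ is commuted past the coefficients of $\hat{B}_1$ and of the two linearizations is again a differential function in $[w]$, so that the entire $u$-\/dependence cancels. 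This is exactly where the defining property of Liouville\/-\/type integrals enters; once it is in place, skew\/-\/symmetry and Jacobi are formal consequences of the congruence structure. Finally, that $\hat{A}_k$ is a Poisson structure \emph{for} the hierarchy $\gA$ follows because $\gA$ and $\gB$ share the Hamiltonians $\cH_i[w]=\cH_i\bigl[w[\gm]\bigr]$ through the Miura map, while $\square$ sends the cosymmetries of $\gA$ to the symmetries of $\gB$; thus the flows of $\gA$ are generated from these Hamiltonians by $\hat{A}_k$, as required.
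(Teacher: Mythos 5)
Your proposal is correct and follows essentially the same route as the paper: both identify $\hat{A}_k$ as the congruence of $\hat{B}_1$ by the Miura substitution $w=w[\gm]$, deduce skew\/-\/symmetry and the Jacobi identity from the resulting equality of the two Poisson brackets, and use the Liouville\/-\/type closure of the velocities of the integrals to descend the coefficients to differential functions of $[w]$. The only cosmetic difference is in that last step, where the paper obtains the closure on the spot from the permutability of evolutionary derivations with $D_y$ together with the minimality of the integrals, rather than citing it from \cite{SokolovUMN}.
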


\begin{proof}
By construction, the Poisson bracket 
$\{\cH_1,\cH_2\}_{\hat{A}_k}=\bigl\langle\bE_w\cH_1,\hat{A}_k\bigl(\bE_w\cH_2\bigr)\bigr\rangle$
satisfies the equality
\begin{equation}\label{RecalcWviaU}
\bigl\{\cH_1[w],\cH_2[w]\bigr\}_{\hat{A}_k}=
   \bigl\{\cH_1[w[\gm]],\cH_2[w[\gm]]\bigr\}_{\hat{B}_1}.
\end{equation}
Therefore the left\/-\/hand side of \eqref{RecalcWviaU} is
bi\/-\/linear, skew\/-\/symmetric, and satisfies the Jacobi identity.
Fourth, it measures the velocity of the integrals $w$ along a Noether
symmetry of $\cEL$. Since evolutionary derivations are permutable with
the total derivative $D_y$, the velocity $\{\cH_1,\cH_2\}_{\hat{A}_k}$
lies in $\ker D_y\bigr|_{\cEL}$ and hence its density is a differential function of the minimal integrals $w$.
\end{proof}

The multi\/-\/component wave equation $\cE_\varnothing=\{s_{xy}=0\}$,
whose symmetries contain the hierarchy $\gA$ and such that 
$\hat{A}_1=\bigl(\ell_w^{(s)}\bigr)^*$ encodes the differential constraint
between the coordinates $s$ and momenta $w$ for $\cE_\varnothing$,
is chosen such that
the first structure $A_1=\smash{\hat{A}^{-1}_1}$ for $\gA$
factors the higher Hamiltonian
structure for $\gB$. Hence 
$
B_{k'}=\square\circ A_1\circ\square^*$, where
$k'=k'\bigl(\square,\bigl(\ell_w^{(s)}\bigr)^*\bigr)\geq2$.

\begin{example}\label{ExA2}
Consider the Euler\/--\/Lagrange 2D~Toda system \eqref{A2Toda}.
The density~$L$ of its Lagrangian is
\[
L=-\tfrac{1}{2}\bigl((2u_x-v_x)\cdot u_y + (2v_x-u_x)\cdot v_y\bigr)
  -\exp(2u-v)-\exp(2v-u).
\]
Therefore we introduce the momenta
$
\gm^1\mathrel{{:}{=}}2u_x-v_x$ and $\gm^2\mathrel{{:}{=}}2v_x-u_x$,
whence we express the integrals as $w=w[\gm]$. 
All symmetries (
up to $x\leftrightarrow y$) of \eqref{A2Toda} are of the form
$\vph=\square\bigl({\bphi}\bigl(x,[w^1],[w^2]\bigr)\bigr)$, where
${\bphi}={}^t(\phi_1,\phi_2)$ is a pair of arbitrary functions 
and the $(2\times2)$-\/matrix total differential operator 
is 
\[
\square=\ell_{w^1,w^2}^{(\gm^1,\gm^2)}=
\begin{pmatrix}
u_x+D_x & 
 -\tfrac{2}{3}D_x^2-u_xD_x-\tfrac{1}{3}u_x^2-\tfrac{2}{3}u_xv_x
   +\tfrac{2}{3}v_x^2+\tfrac{1}{3}u_{xx}-\tfrac{2}{3}v_{xx} \\
v_x+D_x & 
 -\tfrac{1}{3}D_x^2+\tfrac{2}{3}u_{xx}-\tfrac{1}{3}v_{xx}
   -\tfrac{2}{3}u_x^2+\tfrac{2}{3}u_xv_x+\tfrac{1}{3}v_x^2 \end{pmatrix}.
\]
The entries of the arising Hamiltonian operator 
$\smash{\hat{A}_2}=\left\|A_{ij},
1\leq i,j\leq2\right\|$ 
are~\cite{Olver}  
\begin{align*}
A_{11} &= 2 D_x^3 +2 w^1 D_x + w^1_x,\\
A_{12} &= -D_x^4 -w^1 D_x^2 + (3 w^2-2 w^1_x)\cdot D_x +(2 w^2_x-w^1_{xx}),\\
A_{21} &= D_x^4 + w^1 D_x^2 +3 w^2 D_x + w^2_x\\
A_{22} &= -\tfrac{2}{3} D_x^5 -\tfrac{4}{3} w^1 D_x^3 -2 w^1_x D_x^2 
 + (2 w^2_x-2 w^1_{xx}-\tfrac{2}{3} (w^1)^2)\cdot D_x \\
 {}&\qquad{}+\tfrac{1}{3}(3 w^2_{xx}-2 w^1_{xxx}-2 w^1 w^1_x).
\end{align*}
The shift $w^2\mapsto w^2+\lambda$ of the second integral
yields the `junior' Hamiltonian operator\footnote{The analogous operator 
$\hat{A}_1^{(1)}=\tfrac{d}{d\mu}{\bigr|}_{\mu=0}\bigl(\hat{A}_k\bigr)$, where
$w^1\mapsto w^1+\mu$, is not Hamiltonian.}
$\hat{A}_1^{(2)}
\mathrel{{:}{=}}\tfrac{d}{d\lambda}{\Bigr|}_{\lambda=0}
   \bigl(\hat{A}_2\bigr)
 =\left(\begin{smallmatrix}0 & 3D_x\\ 3D_x & 0\end{smallmatrix}
\right)$,
which is compatible with~$\smash{\hat{A}_2}$.

The pair $(\hat{A}_1^{(2)},\hat{A}_2)$ is the 
bi\/-\/Hamiltonian structure for the Boussinesq equation
$w^1_t=2w^2_x-w^1_{xx}$,
$w^2_t=-\tfrac{2}{3}w^1_{xxx}-\tfrac{2}{3}w^1w^1_x+w^2_{xx}$.
The symmetry ${w}_x=\bigl(\hat{A}_2\circ{\delta}/{\delta{w}}\bigr)
\bigl(\int w^1\,{\mathrm{d}} x\bigr)$
starts the second sequence of Hamiltonian flows in 
the Boussinesq hierarchy~$\mathfrak{A}$.
The modified Boussinesq hierarchy~$\mathfrak{B}$ shares the two sequences of Hamiltonians with~$\gA$ by the Miura substitution $w=w[\mathfrak{m}]$ with $\mathfrak{m}=\mathfrak{m}[u]$. 
Namely, for any Hamiltonian~$\mathcal{H}[w]$, the flows
$u_\tau={\delta\mathcal{H}[\mathfrak{m}]}\bigr/{\delta\mathfrak{m}}$,
$\mathfrak{m}_\tau=-{\delta\mathcal{H}\bigl[\mathfrak{m}[u]\bigr]}\bigr/{\delta u}$
belong to
~$\mathfrak{B}$.
The velocities~$u_\tau$ constitute the commutative subalgebra of Noether symmetries for the 2D~Toda chain~\eqref{A2Toda}.
\end{example}

\paragraph{2. Commutation relations in $\sym\cEL$.}\label{SecCommut}
In this section we prove the commutation closure for the images of 
operators~\eqref{Square} by using the well\/-\/known analogous property of
the auxiliary Hamiltonian operators~\eqref{Quattro}. At once, we describe 
the 
relations in the symmetry algebra generated by~\eqref{SymForLiou} for~$\cEL$.

First, consider a linear total differential operator $A$ whose arguments 
$\bphi(x,[w])={}^t\bigl(\phi_1,\ldots,\phi_r\bigr)$ are the variational
covectors for the infinite jet bundle over $\xi$.
Assume that the image of $A$ in the Lie algebra of evolutionary vector 
fields $\cEv_{\vph}$ 
is closed w.r.t.\ the commutation:
$[\text{im}\,A, \text{im}\,A]\subseteq\text{im}\,A$. By the Leibnitz rule,
two sets of summands appear in the bracket of 
fields $A(\phi'),A(\phi'')$ that belong to the image of $A$: 
\[\bigl[A(\phi'),A(\phi'')\bigr]=A\bigl(\cEv_{A(\phi')}(\phi'')-
  \cEv_{A(\phi'')}(\phi')\bigr)+\bigl(
  \cEv_{A(\phi')}(A)(\phi'')-\cEv_{A(\phi'')}(A)(\phi')\bigr).\]
In the first summand we have used the permutability of
evolutionary derivations and total derivatives. The second
summand hits the image of $A$ by construction. 

The commutator $[\,,\,]\bigr|_{\mathrm{im}\,A}$ induces
a Lie algebra structure $[\,,\,]_A$
in the quotient $\Omega(\xi_\pi)$ 
of the domain of $A$ by its kernel:
\begin{subequations}\label{EqOplusB}
\begin{gather}
\bigl[A(\phi'),A(\phi'')\bigr]=A\bigl([\phi',\phi'']_A),\qquad
\phi',\phi''\in\Omega(\xi_\pi). 
   \label{EqOplusBBoth}\\
\intertext{This bracket, which is defined up to $\ker A$, equals}
[\phi',\phi'']_A=\cEv_{A(\phi')}(\phi'')-\cEv_{A(\phi'')}(\phi')+
  \{\!\{\phi',\phi''\}\!\}_A.\label{EqOplusBKoszul}
\end{gather}
\end{subequations}
It contains the two standard summands and
the skew\/-\/symmetric bilinear bracket $\{\!\{\,,\,\}\!\}_A$.

\begin{lemma}[\textup{\cite{Olver,Opava}}]\label{AllKnown}
The image of a Hamiltonian operator $\smash{\hat{A}}=
\bigl\|\sum\nolimits_\tau A^{\alpha\beta}_\tau(x,[w])\cdot D_\tau\bigr\|$
is closed w.r.t.\ the commutation.
The $k$-th component ($1\leq k\leq r$) of the 
arising bracket $\smash{\ib{\,}{\,}{\hat{A}}}$ 
on the domain of~$\hat{A}$   
is calculated by the formula  
\begin{equation}\label{EqDogma}
\ib{\phi'}{\phi''}{\hat{A}}^k=\sum_{|\sigma|\geq0}\sum_{i=1}^r (-1)^\sigma
 \Bigl(D_\sigma\circ\Bigl[\sum_{|\tau|\geq0}\sum_{j=1}^r D_\tau(\phi'_j)\cdot
 \frac{\dd A_\tau^{ij}}{\dd w^k_\sigma}\Bigr]\Bigr)
 \bigl(\phi''_i\bigr).
\end{equation}
The coefficients 
of the bilinear terms in the bracket $\ib{\,}{\,}{\hat{A}}$ are
differential functions of the variables~$w$.
\end{lemma}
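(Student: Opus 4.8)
The plan is to obtain both assertions --- closure of $\img\hat A$ under commutation and the formula~\eqref{EqDogma} --- from the Leibniz rule together with the single defining property of a Hamiltonian operator, namely skew\/-\/adjointness $\hat A^*=-\hat A$ supplemented by the vanishing of the Schouten bracket $[\![\hat A,\hat A]\!]=0$ (equivalently, the Jacobi identity for the Poisson bracket $\{\cdot,\cdot\}_{\hat A}$). First I would write the commutator of two fields $\hat A(\phi'),\hat A(\phi'')$ in the image and, using that evolutionary derivations commute with the $D_\tau$, split it by the Leibniz rule exactly as in~\eqref{EqOplusBKoszul}: one group in which $\cEv_{\hat A(\phi')}$ falls on the argument $\phi''$ (and symmetrically), giving $\hat A\bigl(\cEv_{\hat A(\phi')}(\phi'')-\cEv_{\hat A(\phi'')}(\phi')\bigr)\in\img\hat A$ at once, and a second group $\cEv_{\hat A(\phi')}(\hat A)(\phi'')-\cEv_{\hat A(\phi'')}(\hat A)(\phi')$ in which the derivation falls on the coefficients $A^{ij}_\tau(x,[w])$. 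All the content of the lemma sits in this second group.

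For that group I would expand $\cEv_{\hat A(\phi')}(A^{ij}_\tau)$ by the chain rule as $\sum_{k,\sigma}(\dd A^{ij}_\tau/\dd w^k_\sigma)\,D_\sigma\bigl(\hat A(\phi')^k\bigr)$, and symmetrically with $\phi',\phi''$ interchanged. The a priori expression still carries $\hat A$ applied to \emph{both} arguments, so it is not visibly in the image; the task is to re\/-\/collect it as $\hat A$ of a single covector. This is exactly where $[\![\hat A,\hat A]\!]=0$ enters: it lets me transpose one factor of $\hat A$ off $\phi'$, and skew\/-\/adjointness performs the integration by parts that produces the operators $(-1)^\sigma D_\sigma\circ(\cdot)$ visible in~\eqref{EqDogma}. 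The upshot is the identity $\hat A\bigl(\{\!\{\phi',\phi''\}\!\}_{\hat A}\bigr)=\cEv_{\hat A(\phi')}(\hat A)(\phi'')-\cEv_{\hat A(\phi'')}(\hat A)(\phi')$, with the $k$-th component of the covector $\{\!\{\phi',\phi''\}\!\}_{\hat A}$ read off as~\eqref{EqDogma}; closure of $\img\hat A$ is then immediate, and the skew\/-\/symmetry of $\{\!\{\cdot,\cdot\}\!\}_{\hat A}$ claimed before the lemma is itself a corollary of the same Schouten identity.

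The final assertion is automatic once the formula is in hand: the entries $A^{ij}_\tau$ are differential functions of $w$ by hypothesis, and \eqref{EqDogma} produces its coefficients from them using only the partial derivatives $\dd/\dd w^k_\sigma$ and the total derivatives $D_\sigma,D_\tau$, all of which preserve the class of differential functions of $w$.

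I expect the main obstacle to be precisely the re\/-\/collection in the second paragraph: done naively the bilinear term contains $\hat A$ inside $\hat A$ and is not manifestly in the image, so one must invoke $[\![\hat A,\hat A]\!]=0$ in exactly the right form and keep track of the integration\/-\/by\/-\/parts signs $(-1)^\sigma$. Since the statement is classical~\cite{Olver,Opava}, in the write\/-\/up I would either quote the corresponding Jacobi\/-\/identity computation verbatim or, more economically, pass to the bivector $\tfrac12\langle p,\hat A(p)\rangle$ on the odd cotangent bundle over $J^\infty(\xi)$, where $[\![\hat A,\hat A]\!]=0$ is one scalar identity and \eqref{EqDogma} drops out of a single Schouten\/-\/bracket evaluation rather than a lengthy index manipulation.
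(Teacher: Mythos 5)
The paper does not prove this lemma: it is imported wholesale from \cite{Olver,Opava}, so there is no internal proof to compare against. Your sketch is the standard argument from those references and is structurally correct: the Leibnitz split isolates the coefficient terms $\cEv_{\hat{A}(\phi')}(\hat{A})(\phi'')-\cEv_{\hat{A}(\phi'')}(\hat{A})(\phi')$, and the whole content is the identity asserting that this antisymmetrized expression equals $\hat{A}\bigl(\ib{\phi'}{\phi''}{\hat{A}}\bigr)$ modulo $\ker\hat{A}$, where \eqref{EqDogma} is recognized as $\bigl(\ell^{(w)}_{\hat{A}(\phi')}\bigr)^{*}(\phi'')$ with $\phi'$ frozen inside the linearization. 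You are right that this step genuinely needs $\lshad\hat{A},\hat{A}\rshad=0$ and not merely $\hat{A}^{*}=-\hat{A}$ (for a skew\/-\/adjoint non\/-\/Hamiltonian operator the image need not be closed under commutation), and right that the skew\/-\/symmetry of $\ib{\,}{\,}{\hat{A}}$, which is not manifest in \eqref{EqDogma}, is a further consequence of the same identity; the third assertion of the lemma is indeed read off the formula directly. What separates your text from a complete proof is the one computation you yourself flag --- extracting $\hat{A}$ from the coefficient terms via the Schouten identity together with the integrations by parts that generate the operators $(-1)^{\sigma}D_{\sigma}\circ(\cdot)$ --- and deferring that computation to \cite{Olver,Opava} is precisely what the authors do.
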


Now we pass from the Hamiltonian operators~\eqref{Quattro} to the operators~$\square$
that have the same domain as~$\smash{\hat{A}_k}$ but take values in a different Lie algebra. Here is 
our main result.

\begin{theor}\label{IspHO}
Let the following conditions\footnote{The list can be not minimal such that
  it is easier to verify each requirement.}
be satisfied on an open dense subset of
the Euler\/--\/Lagrange system~$\cEL=\bigl\{\bu_{xy}=f(\bu;x,y
)\bigr\}$ 
of Liouville type \textup{(}see Proposition~\textup{\ref{NoetherSymTh}):}
\begin{itemize}
\item the constant symmetric real matrix~$\kappa$ in the kynetic term of the
Lagrangian density~$\cL$ be invertible\textup{;}
\item the linearization $\ell_f^{(u)}=\|\dd f^i/\dd u^j\|\equiv f'(\bu;x,y)$
of the right\/-\/hand side in~$\cEL$ be an invertible matrix\textup{;}
\item there be as many integrals $w^i\bigl(x,[\gm]\bigr)\in\ker D_y{\bigr|}_{\cEL}$ 
as there are unknowns~$u^j$\textup{;}
\item the integrals $w$ be minimal\textup{:} $\Phi\in\ker D_y{\bigr|}_{\cEL}$
implies $\Phi=\bigl(x,[w]\bigr)$\textup{;}
\item the integrals~$w$ be differential\/-\/functional 
independent,\footnote{\label{CompareLeftRight}The 
non\/-\/existence of a nontrivial $\Phi$, and hence of its nonzero
linearization $\ell_\Phi^{(\gm)}=\ell_\Phi^{(w)}\circ\ell_{w}^{(\gm)}$,
is equivalent to $\bigl(\nabla\circ\ell_w^{(\gm)}$ with 
$\nabla=\ell_{\bullet}^{(\gm)}\bigr)\Rightarrow\nabla=0$. This nondegeneracy
requirement is dual to~\eqref{NoKernelsIntersectNew}, see below.}
meaning
that $\Phi\bigl(x$, $\bigl[w[\gm]\bigr]\bigr)=0$ implies $\Phi\equiv0$\textup{;}
\item the $(r\times m)$-\/matrix $\Lambda=\bigl\|\dd w^i/\dd \gm^j_{d(i)}\bigr\|$,
where $d(i)\mathrel{{:}{=}}\ord_x w^i$ is the differential order of the $i$-\textup{th} integral $w^i[\gm]$, 
be invertible\textup{.}
\end{itemize}
Then the following statements hold\textup{:}
\begin{enumerate}
\item 
the image of the operator \eqref{Square} is closed w.r.t.\ the commutation
of symmetries
$\vph=\square\bigl(\boldsymbol{\phi}(x,[w])\bigr)\in\sym\cEL$\textup{;}
\item 
the bracket $\ib{\,}{\,}{\square}$ arising on the domain 
of the operator $\square$ satisfies the equality
\begin{equation}\label{CalculateSokBrViaHam}
\ib{\phi'}{\phi''}{\square}=\ib{\phi'}{\phi''}{\hat{A}_k}
,\qquad \phi',\phi''\in\Omega(\xi_\pi); 
\end{equation}
\item 
the coefficients 
of the bilinear terms in the bracket $\ib{\,}{\,}{\square}$ are
differential functions of the integrals~$w$.
\end{enumerate}
\end{theor}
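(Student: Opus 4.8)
The plan is to transfer all three assertions to the auxiliary Hamiltonian operator $\hat{A}_k=\square^*\circ\hat{B}_1\circ\square$ of Lemma~\ref{AkIsHam}, for which the analogues are already available from Lemma~\ref{AllKnown}. The bridge I would establish first is the \emph{velocity identity}
\[
\cEv_{\square(\bphi)}(w)=\hat{A}_k(\bphi),
\]
stating that the velocity of the integrals $w$ along the Noether symmetry $\square(\bphi)$ is computed by $\hat{A}_k$. To obtain it I apply the chain rule to the differential substitution $w=w\bigl[\gm[u]\bigr]$: since $\cEv_{\square(\bphi)}(u)=\square(\bphi)$ and $\gm=\dd L/\dd u_y$, one has $\cEv_{\square(\bphi)}(w)=\ell_w^{(\gm)}\circ\ell_\gm^{(u)}\bigl(\square(\bphi)\bigr)$. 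Then I substitute $\square^*=\ell_w^{(\gm)}$, use that $\hat{B}_1=\bigl(\ell_\gm^{(u)}\bigr)^*$ is skew\/-\/adjoint because $\gm$ is linear in $u_x$ with the constant symmetric matrix $\kappa$ (so $\ell_\gm^{(u)}$ is a multiple of $\kappa D_x$), and read off $\cEv_{\square(\bphi)}(w)=\square^*\circ\hat{B}_1\circ\square(\bphi)=\hat{A}_k(\bphi)$ after fixing the normalization of the momenta. This is precisely the geometric meaning of $\hat{A}_k$ anticipated in the proof of Lemma~\ref{AkIsHam}.

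With the velocity identity in hand, the key observation is that the velocity map $V\colon\vph\mapsto\cEv_\vph(w)$ is a Lie algebra homomorphism from the symmetries of $\cEL$ to the evolutionary vector fields on $J^\infty(\xi)$: because $w=w[u]$ is a differential substitution, $V\bigl([\vph',\vph'']\bigr)=\bigl[V(\vph'),V(\vph'')\bigr]$ on the $w$\/-\/jet space (the substitution principle used after Corollary~\ref{SymStructure}). Applying $V$ to the commutator of two Noether symmetries, then using the velocity identity together with the closure of $\img\hat{A}_k$ from Lemma~\ref{AllKnown}, I get
\[
V\bigl([\square(\bphi'),\square(\bphi'')]\bigr)
=\bigl[\hat{A}_k(\bphi'),\hat{A}_k(\bphi'')\bigr]
=\hat{A}_k\bigl([\bphi',\bphi'']_{\hat{A}_k}\bigr)
=V\bigl(\square([\bphi',\bphi'']_{\hat{A}_k})\bigr),
\]
where $[\,,\,]_{\hat{A}_k}$ is the full domain bracket \eqref{EqOplusBKoszul} for $\hat{A}_k$. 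Thus $[\square(\bphi'),\square(\bphi'')]$ and $\square([\bphi',\bphi'']_{\hat{A}_k})$ have the same velocity of the integrals. Provided $V$ is faithful on the relevant symmetries (see below), this forces $[\square(\bphi'),\square(\bphi'')]=\square\bigl([\bphi',\bphi'']_{\hat{A}_k}\bigr)\in\img\square$, which is the commutation closure of statement~1 and simultaneously identifies the induced domain bracket $[\,,\,]_\square$ with $[\,,\,]_{\hat{A}_k}$. Since the velocity identity makes the transport terms $\cEv_{\square(\bphi')}(\bphi'')-\cEv_{\square(\bphi'')}(\bphi')$ in \eqref{EqOplusBKoszul} coincide with those of the Hamiltonian bracket (both run along the $w$\/-\/field $\hat{A}_k(\bphi')$), the remaining bilinear parts agree as well, giving \eqref{CalculateSokBrViaHam}, that is statement~2. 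Statement~3 is then immediate: by Lemma~\ref{AkIsHam} the coefficients of $\hat{A}_k$ are differential functions of $w$, so by formula \eqref{EqDogma} the same holds for the bilinear terms of $\ib{\,}{\,}{\hat{A}_k}=\ib{\,}{\,}{\square}$.

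The hard part will be the faithfulness of $V$, i.e.\ that a Noether symmetry of $\cEL$ whose integrals stay frozen must vanish; this is exactly where the full list of hypotheses enters. Concretely, faithfulness reduces to showing $\hat{A}_k(\bphi)=0\Rightarrow\square(\bphi)=0$. Writing $\hat{A}_k=\ell_w^{(\gm)}\circ\hat{B}_1\circ\square$, I would argue in two stages: the skew\/-\/adjoint operator $\hat{B}_1$ is invertible because $\kappa$ is invertible, while $\ell_w^{(\gm)}$ is injective because its leading symbol is the invertible matrix $\Lambda$ and the integrals are minimal and differential\/-\/functionally independent with $r=m$ (footnote~\ref{CompareLeftRight}). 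Together with the invertibility of $\ell_f^{(u)}$ — which I expect to be what guarantees that no symmetry outside $\img\square$ carries vanishing $w$\/-\/velocity, so that the commutator, known a priori only to be \emph{some} symmetry, is genuinely captured by $\square$ — these conditions should yield the required injectivity. Pinning this down rigorously, in particular ruling out that the difference of the two sides above (a symmetry annihilating all the $w^i$) is merely tangent to the characteristic rather than identically zero, is the delicate step where I expect the bulk of the work to lie.
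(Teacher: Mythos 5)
Your architecture is exactly the paper's: the ``velocity identity'' is the paper's commutative diagram, closure of $\img\square$ is deduced from closure of $\img\hat{A}_k$, and the whole burden is shifted onto showing that a symmetry whose velocity of the integrals vanishes is itself zero (equivalently, that $\ker\square$ and $\ker\hat{A}_k$ cut out the same symmetry). You correctly flag this as the delicate step, but the sketch you give of it rests on two claims that are false as stated, and they conceal precisely the work the paper has to do. First, $\hat{B}_1=\bigl(\ell_\gm^{(u)}\bigr)^*$ is proportional to $\kappa D_x$ and is \emph{not} invertible: its kernel contains the constants. Second, $\ell_w^{(\gm)}$ is \emph{not} injective: the induction on the leading symbol $\Lambda$ (differentiating the minimal integrals up to the common order $d(r)$ and multiplying by $\Lambda^{-1}$) only shows that any $\psi\in\ker\ell_w^{(\gm)}$ is independent of $[\gm]$, i.e.\ $\psi=\psi(x)$; such nonzero $\psi(x)$ can genuinely exist, which is the whole subject of Remark~\ref{RemProofX}.

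The paper closes both holes with the invertibility of $f'(\bu)=\ell_f^{(u)}$, to which you assign a different and vaguer role. The cascade is: the $\Lambda$-induction gives $\psi=\psi(x)$; the intermediate equation $\psi(x)=-\tfrac{1}{2}\kappa D_x(\vph)$ with $\det\kappa\neq0$ then forces $\vph=\vph(x)$ (in particular the constants from $\ker D_x$ land here); and finally the determining equation $\bigl(D_xD_y-f'\bigr)\bigl(\vph(x)\bigr)\doteq0$ reduces to $f'\cdot\vph(x)=0$, so $\det f'\neq0$ kills $\vph$. Also needed, and present in the paper but absent from your sketch, is the preliminary dichotomy that the zero velocity $\Phi\equiv0$ cannot arise from a differential dependence among the $w^i$ (excluded by hypothesis) nor from the use of the equation $\cEL$ itself, i.e.\ from the $y$-flow $\gm_y=-\delta H_{\IL}/\delta u$ (excluded because $f$ depends on $\bu$). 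Until these steps are supplied, statements~1 and~2 --- in particular the equality of kernels that makes \eqref{CalculateSokBrViaHam} well posed --- are not proved; statement~3 then does follow as you say.
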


\begin{rem}\label{RemOnMain}
The first assumption of the theorem implies that the system
$\cEL$ is determined, normal, and $\ell$-\/normal (see section~3
). We emphasize that $\cEL$~is the only system of equations imposed upon the
sections $u=s(x,y)\in\Gamma(\pi)$. 


Our second statement means that the ambiguity (up to $\ker\hat{A}_k$)
in the choice of a representative from the equivalence class 
$\ib{\phi'}{\phi''}{\hat{A}_k}$ in
the right\/-\/hand side of~\eqref{CalculateSokBrViaHam} amounts to the
choice of an element from $\ker\square\subseteq\ker\hat{A}_k$.
We prove the equality
of the kernels for all~$\boldsymbol{\phi}([w])\in\Omega(\xi_\pi)$.
This implies that commutation relations in~$\sym\cEL$,
which are determined by 
the Lie algebra structure~\eqref{EqOplusBKoszul} on $\Omega(\xi_\pi)$,
are obtained explicitly via~\eqref{EqDogma} for~$\smash{\hat{A}_k}$.

Being a corollary of Lemma~\ref{AllKnown} and the first two, 
our third statement is, at the same time, a special case
of Proposition~\ref{LCoeffInKernel} (see~below).
\end{rem}

\begin{proof}[Proof of Theorem~\textup{\ref{IspHO}}]
We notice first that 
symmetries~\eqref{SymForLiou} are independent of $u$ and of 
$u_y,u_{yy},\ldots$. Hence this is also true for the commutator
$\vph=[\vph',\vph'']\in\sym\cEL$ of two such symmetries
$\varphi'=\square\bigl(\boldsymbol{\phi}'(x,[w])\bigr)$ and 
$\varphi''=\square\bigl(\boldsymbol{\phi}''(x,[w])\bigr)$,
because the Lie bracket is a local bi\/-\/differential operator. 

The factorization 
\eqref{Quattro} and Lemma~\ref{AllKnown} provide the diagram
\[
\begin{diagram}
{} && \dot{w}=\Phi\bigl(x,[w]\bigr) & \lto & 
    \bigl[{\phi'},{\phi''}\bigr]_{\hat{A}_k} \\
{} & \ruto^{\ell_w^{(\gm)}} & & \luto^{\hat{A}_k} & 
   \dAreEqual \\
\dot{\gm}=\psi\bigl(x,[\gm]\bigr) & & \circlearrowright & & \boldsymbol{\phi}=
 \boldsymbol{\phi}\bigl(x,[w]\bigr).\\
{} & \luto_{\hat{B}_1=\ell_{\gm}^{(u)}} & & \ldto_{\square} & {}\\
{} & & \dot{u}=\vph\bigl(x,[u_x]\bigr) & & {}
\end{diagram}
\]
The commutator $\vph=[\vph',\vph'']$ determines the velocity
$\Phi\bigl(x,[w]\bigr)$ of the integrals that equals\footnote{The 
  $\ell$-\/normality of $\cEL$ implies that 
$\vph$~\emph{is} its symmetry whenever the velocity $\cEv_\vph(w)$ of the minimal integrals lies in $\ker D_y{\bigr|}_{\cEL}$, 
   see~\eqref{Preserve} in section~3.
} 
$\cEv_{[\vph',\vph'']}(w)=\bigl[\hat{A}_k(\phi'),\hat{A}_k(\phi'')\bigr]$.
Since the image of the Hamiltonian operator~$\hat{A}_k$ is closed under
commutation, we obtain the equivalence class 
$\boldsymbol{\phi}\bigl(x,[w]\bigr)=
\bigl[\phi',\phi''\bigr]_{\hat{A}_k}$ 
of sections such that $\Phi=\hat{A}_k(\boldsymbol{\phi})=\cEv
_{\square(\boldsymbol{\phi})}(w)$. By construction of~$\hat{A}_k$,
the commutator of $\vph'$ and $\vph''$ belongs to the set of symmetries
$\square(\boldsymbol{\phi})$. This proves the first statement of the theorem.

However, there may be many such $\vph=\square(\boldsymbol{\phi})$
that induce the same velocity $\dot{w}=\Phi$. Since 
$\ker\square\subseteq\ker\hat{A}_k$, then, in principle, 
the equivalence class $[{\phi'},{\phi''}]_{\hat{A}_k}$ may contain
elements that do not belong to
~$[{\phi'},{\phi''}]_{\square}$.

We claim that all representatives of the equivalence class
$[{\phi'},{\phi''}]_{\hat{A}_k}$ determine a unique symmetry $\vph$ 
of~$\cEL$. 
Therefore this 
$\vph=\square(\boldsymbol{\phi})$ is 
the commutator
$\bigl[\square(\boldsymbol{\phi}'),\square(\boldsymbol{\phi}'')\bigr]$ 
of the two symmetries, 
because the image of $\boldsymbol{\phi}$ under $\square$ must contain~it.
It suffices to prove the uniqueness of the trivial solution~$\vph$ for the 
linear homogeneous equation $\ell_w^{(u)}(\vph)=\bigl(\ell_{w[\gm]}^{(\gm)}\circ\ell_{\gm}^{(u)}\bigr)(\vph)=0$.

There exist three 
ways to obtain the zero velocity $\Phi\bigl(x,[w]\bigr)
\equiv0$ of the integrals~$w\bigl[\gm[u]\bigr]$ along $\vph\in\sym\cEL$.
The first possibility is that the integrals be differentially dependent, 
which is excluded by the assumption of the theorem.
Second, the intermediate equation $\kappa\,D_x(\vph)=0$, $\det\kappa\neq0$,
may have nontrivial solutions only if some shifts $\vph=\text{const}$ are
symmetries of~$\cEL$. However, the determining equation 
$\bigl(D_xD_y-f'(\boldsymbol{u};x,y)\bigr)(\text{const})\doteq0$ on~$\cEL$ 
then exprimes the linear dependence between the differentials of its 
right\/-\/hand sides. This contradicts another initial assumption.

Therefore the proof of the second statement is reduced to the uniqueness
problem for the zero solution~$\psi
=0$ of the linear homogeneous equation
$\ell_{w[\gm]}^{(\gm)}(\psi)=0$. 
Since $\cEL$ is the only system\footnote{The hyperbolic system~$\cEL$ is
\emph{formally integrable} \cite{ClassSym}: its infinite prolongation~$\cEL^\infty$ exists and there is an epimorphism $\cEL^\infty\to\cEL$.}
of differential relations that are imposed upon
the sections $u=s(x,y)\in\Gamma(\pi)$, the zero in the right\/-\/hand side of
$\ell_w^{(\gm)}(\psi)=0$ is achieved identically w.r.t.~$[\gm]$ (otherwise it would overdetermine
~$\cEL$).

There remain two situations when a velocity $\psi$ of the momenta
$\gm=-\tfrac{1}{2}\kappa u_x$ makes $\dot{w}=\Phi$ zero.
One reason is obvious: it is the use of the equation~$\cEL=\{\bu_{xy}=f\}$. 
Indeed, if the `time' along the 
flow $\dot{\gm}=\psi$ is the variable~$y$, \textit{i.e.}, 
$\gm_y=-\delta H_{\IL}/\delta u$,   
then we have $\cEv_{\gm_y}^{(\gm)}(w)=
D_y\bigl(w[\gm]\bigr)\doteq0$ on~$\cEL$. But the presence of $\boldsymbol{u}$
in the list if arguments of~$f$ excludes such solutions~$\psi$ from further
consideration.


Without loss of generality we assume that the integral~$w^r[\gm]$ has the highest
differential order: $d(r)\geq d(i)$ for all $i<r$. 
Let us calculate the velocities of the non\/-\/minimal integrals 
$(w')^i\mathrel{{:}{=}}D_x^{d(r)-d(i)}(w^i)$.
Using the permutability $\bigl[D_x,\cEv_\psi^{(\gm)}\bigr]=0$ of evolutionary
derivations with total derivatives, from the
identities $\cEv_\psi^{(\gm)}(w^1)=\dots=\cEv_\psi^{(\gm)}(w^r)=0$
we deduce that $\cEv_\psi^{(\gm)}(w')=0$.
It is readily seen that the linearization of the new integrals is of the form
$\ell_{w'}^{(\gm)}=\Lambda\cdot D_x^{d(r)}+O(d(r)-1)$, 
where the matrix~$\Lambda$ is invertible by our initial 
assumption.
Multiplying the new equation $\ell_{w'}^{(\gm)}
(\psi
)=0$
by $\Lambda^{-1}$, we obtain by induction 
that~$\psi\bigl(x,[\gm]\bigr)$ does not depend 
on~$[\gm]$, hence $\psi=\psi(x)$. 
Consequently, the admissible sections $\varphi$ that solve the intermediate equation $\psi(x)=-\frac{1}{2}\kappa D_x(\varphi)$ also depend on $x$ only: 
$\varphi=\varphi(x)$. However, such sections, whenever nonzero, 
can not be\footnote{This is an immediate, point\/-\/by generalization of the fact (see above) that $\varphi=\text{const}\neq0$ is not a symmetry of~$\cEL$.}
symmetries of the hyperbolic system 
$u_{xy}-f(\boldsymbol{u};x,y)=0$ due to the nondegeneracy $\det f'(\boldsymbol{u})\neq 0$. In this notation, the ``symmetry'' $\varphi(x)$ must satisfy the determining equation $\bigl(D_x\circ D_y - f'(\boldsymbol{u})\bigr)\,\varphi(x) \doteq 0$ on~$\cEL$. The first summand vanishes because $D_y(x)\equiv0$. Thus we obtain $f'(u)\cdot\varphi(x)=0$, where the linearization matrix $f'(\boldsymbol{u})$ is invertible. Hence $\varphi(x)\equiv 0$. This completes the proof.
\end{proof}

\begin{rem}\label{RemProofX}
In the proof, we arrived at the linear ODE 
$\ell_w^{(\gm)}\bigl(\psi(x)\bigr)=0$ that holds simultaneously for all
sections $s\in\Gamma(\pi)$, although nonzero solutions~$\psi(x)$ do not
contribute to the symmetry algebra. 
This is possible, first, if there is a total differential operator $\nabla$ such that $\ell_w^{(\gm)}\circ\nabla=0$.
(For instance, the identity $\left(\begin{smallmatrix}D_x & 1\\ 0 & 0\end{smallmatrix}\right)\binom{1}{-D_x}\bigl(h(x)\bigr)\equiv0$ holds for all~$h(x)$.) To avoid this, it is necessary 
\begin{itemize}
\item to require the \emph{nondegeneracy} of the linearization of the integrals:
\begin{equation}\label{NoKernelsIntersectNew}
\ell_{w[\gm]}^{(\gm)}\circ\nabla=0\ \Longrightarrow\ \nabla=0.
\end{equation}
In the adjoint form, 
$\nabla^*\circ\square=0\Rightarrow\nabla^*=0$, equation~\eqref{NoKernelsIntersectNew} exprimes the 
absence of linear differential relations\footnote{Thence 
the nondegeneracy~\eqref{NoKernelsIntersectNew} is analogous to the notion of $\ell$-\/normal differential equations in
the analysis of their formal integrability, see section~3.
}
between components of
symmetries~$\vph=\square\bigl(\boldsymbol{\phi}\bigl(x,[w]\bigr)\bigr)$.
\end{itemize}
This property is dual to the nondegeneracy $\nabla\circ\ell_w^{(\gm)}=0$ $\Rightarrow$ $\nabla=0$ that originates from the differential\/-\/functional independence $\Phi\bigl(x,[w]\bigr)=0$ $\Rightarrow$ $\Phi\equiv0$ via $\nabla=\ell_\Phi^w$, see footnote~\ref{CompareLeftRight} 
on p.~\pageref{CompareLeftRight}. 

Finally, let the section $s(x,y)\in\Sol\cEL$ be a solution of the 
Darboux\/-\/integrable Liouville\/-\/type system~$\cEL$. Taking the
restriction ${\EuScript L}^s=\ell_w^{(\gm)}{\bigr|}_{j_\infty(s)}$ of the linearization
operator onto the jet of~$s$, we obtain
the ordinary differential equation ${\EuScript L}^s\bigl(\psi(x)\bigr)=0$. 
For each~$s$, the linear space $\mathcal{O}(s)$
of its solutions is finite dimensional. (For example, its
dimension is equal to the sum of the exponents of a semi\/-\/simple complex
Lie algebra if $\cEL$ is the associated 2D~Toda chain.) Therefore 
\begin{itemize}
\item the requirement
\[
\bigcap\limits_{s\in\Sol\cEL\subset\Gamma(\pi)} \mathcal{O}(s)=\{0\},
\]
\end{itemize}
in combination with~\eqref{NoKernelsIntersectNew}, 
permits to eliminate the excessive freedom in the choice of solutions
$\psi(x)$ of the equation~$\smash{\ell_w^{(\gm)}}(\psi)=0$.
\end{rem}

Theorem~\ref{IspHO} 
is illustrated for semi\/-\/simple complex Lie algebras of rank two
in~\cite{Protaras}, where
the Hamiltonian operators $\smash{\hat{A}_1}$ and $\smash{\hat{A}_k}$
are constructed for the corresponding Drinfel'd\/--\/Sokolov 
hierarchies~\cite{DSViniti84} and the commutation relations in $\sym\cEL$ 
are calculated for the 
2D~Toda chains $\bu_{xy}=\exp(K\bu)$. 

\begin{example}[The modified Kaup\/--\/Boussinesq equation]\label{ExIra}
Consider an Euler\/--\/Lagrange extension
of the scalar Liouville equation \cite{TMPhGardner},
\begin{equation}\label{EqIra}
A_{xy}=-\tfrac{1}{8}A\exp\bigl(-\tfrac{1}{4}B\bigr),\qquad
B_{xy}=\tfrac{1}{2}\exp\bigl(-\tfrac{1}{4}B\bigr).
\end{equation}
Denote the 
momenta by
$
a=\tfrac{1}{2}B_x$ and $b=\tfrac{1}{2}A_x$.
The minimal integrals of system \eqref{EqIra} are
$
w_1=-\tfrac{1}{4}a^2-a_x$ and $w_2=ab+2b_x$
such that $D_y(w_{i})\doteq0$ on \eqref{EqIra}, $i=1,2$.
Hence the operator
\[
\square=\Bigl(\ell_{w_1,w_2}^{(a,b)}\Bigr)^*=
\begin{pmatrix} -\tfrac{1}{4}B_x+D_x & \tfrac{1}{2}A_x\\
  0 & \tfrac{1}{2}B_x-2\,D_x \end{pmatrix}
\]
determines (Noether, see \eqref{NoetherSym}) symmetries of \eqref{EqIra}.
The bracket $\ib{\,}{\,}{\square}$ induced in the inverse image of 
$\square$ is
\[
\ib{\boldsymbol{\psi}}{\boldsymbol{\chi}}{\square} = \tfrac{1}{2}\cdot   
\begin{pmatrix}
 \psi^2_x\,\chi^1-\psi^1\,\chi^2_x +\psi^1_x\,\chi^2-\psi^2\,\chi^1_x\bigr) \mathstrut \\
 \psi^2_x\,\chi^2-\psi^2\,\chi^2_x\bigr)\mathstrut \end{pmatrix},
\]
where $\boldsymbol{\psi}={}^t(\psi^1,\psi^2)$ and 
$\boldsymbol{\chi}={}^t(\chi^1,\chi^2)$;
we use upper indices for convenience.

Consider a symmetry of \eqref{EqIra},
\begin{equation}\label{pm2KB}
A_t=\tfrac{1}{2}A_xA_{xx}+\tfrac{1}{2}\left(\tfrac{1}{4}A_x^2-1\right)B_x,\qquad
B_t=-2A_{xxx}+\tfrac{1}{8}A_xB_x^2-\tfrac{1}{2}A_xB_{xx}.
\end{equation}
Let us 
choose an equivalent 
pair of integrals
$
u=w_2$, 
$v=w_1+\tfrac{1}{4}w_2^2  
$. 
The evolution of $u$ and $v$ along \eqref{pm2KB}
equals 
\begin{equation}\label{KB}
u_t=uu_x+v_x,\qquad v_t=(uv)_x+ u_{xxx}.
\end{equation}
This is the Kaup\/--\/Boussinesq system, and \eqref{pm2KB} is 
the potential twice\/-\/modified Kaup\/--\/Boussinesq 
equation, see \cite{PavlovConstCurv}.    
The right hand side of the integrable system \eqref{pm2KB} belongs to
the image of the adjoint linearization
$\widetilde{\square}=\bigl(\ell_{(u,v)}^{(a,b)}\bigr)^*$.
The 
operator $\widetilde{\square}$ factors
the \emph{third} Hamiltonian structure
$\hat{A}_3^{\text{KB}} = \widetilde{\square}^* \circ
\bigl(\ell_{(a,b)}^{(A,B)}\bigr)^* \circ \widetilde{\square}$
for \eqref{KB}; we have $k=3$ and
\[
\hat{A}_3^{\text{KB}}=
\begin{pmatrix}
u\,D_x+\tfrac{1}{2}u_x & D_x^3+(\tfrac{1}{4}u^2+v)\,D_x+\tfrac{1}{4}(u^2+2v)_x \\
D_x^3+(\tfrac{1}{4}u^2+v)\,D_x+\tfrac{1}{2}v_x &
   \tfrac{1}{2}(2u\,D_x^3+3u_x\,D_x^2+(3u_{xx}+2uv)D_x+u_{xxx}+(uv)_x)
\end{pmatrix}.
\]
By Theorem~\ref{IspHO}, 
the bracket $\ib{\,}{\,}{\widetilde{\square}}$ is equal to
$\ib{\,}{\,}{\hat{A}_3^{\text{KB}}}$,
which is given by formula \eqref{EqDogma}. We obtain
\[
\ib{\boldsymbol{\psi}}{\boldsymbol{\chi}}{\widetilde{\square}} =
\ib{\boldsymbol{\psi}}{\boldsymbol{\chi}}{\hat{A}_3^{\text{KB}}} =
\begin{pmatrix}
\boldsymbol\psi\cdot\nabla_1(\boldsymbol\chi)
   -\nabla_1(\boldsymbol\psi)\cdot\boldsymbol\chi \\
\boldsymbol\psi\cdot\nabla_2(\boldsymbol\chi)
   -\nabla_2(\boldsymbol\psi)\cdot\boldsymbol\chi
\end{pmatrix},
\]
where
$\nabla_1= - \tfrac{1}{2}\left(\begin{smallmatrix} D_x & 0 \\
   u\,D_x & D_x^3+v\,D_x \end{smallmatrix}\right)$
and 
$\nabla_2= - \tfrac{1}{2}\left(\begin{smallmatrix} 0 & D_x \\
   D_x & u\,D_x \end{smallmatrix}\right)$.
The operator $\hat{A}_1=\left(\begin{smallmatrix}0 & D_x\\
D_x & 0\end{smallmatrix}\right)$ is the first Hamiltonian structure
for \eqref{KB}; its inverse $A_1=\hat{A}_1^{-1}$ factors the second
Hamiltonian structure 
$B_2=\widetilde{\square}\circ A_1\circ\widetilde{\square}$
for \eqref{pm2KB}.
\end{example}


\paragraph{3.\ Non\/-\/Lagrangian Liouville\/-\/type systems.}\label{SecNonEL}
Let $\cE=\{{F}=0\}$ be a Liouville\/-\/type system; 
now it may not be Euler\/--\/Lagrange. Let a column
$w\in\ker D_y\bigr|_{\cE}$ be 
composed by minimal integrals for $\cE$, thence $D_y(w)=\nabla({F})$
for some operator $\nabla$.
By construction of the Liouville\/-\/type systems $\cE$,
there are no differential relations (\emph{syzygies})
between the hyperbolic equations $\{F^i=0\}$ in them:
$\Delta({F})=0$ implies $\Delta=0$. 
For the same reason, the systems $\cE$ 
are $\ell$-\/normal \cite{ClassSym,Opava}: 
$\Delta\circ\ell_F\doteq0$ on $\cE$ also requires $\Delta=0$. 
Consequently, 
an evolutionary vector field $\cEv_\vph$ 
is a symmetry of a Liouville\/-\/type system $\cE$ if and only if it
preserves the integrals,
\begin{equation}\label{Preserve}
D_y\bigl(\cEv_\vph(w)\bigr)=\cEv_\vph(\nabla)(F)+\nabla\bigl(\cEv_\vph(F)\bigr)
\doteq\nabla\bigl(\ell_F(\vph)\bigr)\text{ on $\cE$.}
\end{equation}
Consider the operator equation
\[
D_y\circ\ell_w^{(u)}\doteq\nabla\circ\ell_F\text{ on $\cE$.}
\]
If, hypothetically, a total differential
operator $\square$ 
such that 
\begin{equation}\label{OpEq} 
\ell_w^{(u)}\circ\square\in\CDiff\smash{\Bigl(\ker
  D_y\bigr|_{\cE}\to\ker D_y\bigr|_{\cE}\Bigr)}
\end{equation}
were constructed, 
then it would assign symmetries $\vph=\square(\bphi)$
of the Liouville\/-\/type system $\cE$ to arbitrary $r$-\/tuples 
$\bphi\bigl(x,[w]\bigr)$ of the integrals, see \eqref{SymForLiou}.

The recent paper \cite{SokStar} contains an algorithm for construction of operator solutions $\square$ for the equation in total derivatives
\begin{equation}\label{OpEqSS}
\ell_w^{(u)}\circ\square=
  \bun_{m\times m}\cdot D_x^k\mod\CDiff_{<k}\smash{\Bigl(\ker
  D_y\bigr|_{\cE}\to\ker D_y\bigr|_{\cE}\Bigr)}.
\end{equation}
Most remarkably, the truncation from below for the se\-qu\-en\-ce of
coefficients of lower order derivatives in $\square$ is a consequence of the
presence of a complete set of the integrals $\bar{w}\in\ker D_x\bigr|_{\cE}$
for $\cE$.
However, the \emph{minimal} integrals $w$ must be `spoilt'
by differentiating w.r.t.\ $x$ a suitable number of times. Consequently,
instead of the \emph{Hamiltonian} operator $\hat{A}_k=\ell_w^{(u)}\circ\square$, see \eqref{Quattro}, one obtains the r.h.s.\ of \eqref{OpEqSS}.
Likewise, the images of operators constructed in \cite{SokStar} do not always span
the entire $x$\/-\/components of the Lie algebras $\sym\cE$, and the images
are generally not closed under the commutation. Moreover, the transformation rules
in the domains of $\square$ under reparametrizations $\tilde{w}[w]$ of the integrals remain unspecified 
for non\/-\/Lagrangian Liouville\/-\/type systems.

\begin{state}\label{LCoeffInKernel}
If the image of a solution $\square$ of the operator equation \eqref{OpEq}
for a Liouville\/-\/type system $\cE$ is closed under the commutation,
then all coefficients of the bracket $\ib{\,}{\,}{\square}$ on its domain,
see \eqref{EqOplusB}, belong to $\smash{\ker D_y\bigr|_{\cE}}$.
\end{state}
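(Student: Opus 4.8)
The plan is a two\/-\/stage descent: first show that the bracket $\ib{\,}{\,}{\square}$ takes its \emph{values} in $\ker D_y\bigr|_{\cE}$, and then upgrade this to a statement about its \emph{coefficients} by peeling the bidifferential expression off its arguments. Throughout I use that, since $\square$ solves~\eqref{OpEq}, every $\vph=\square(\phi)$ is a symmetry of $\cE$ and the velocity of the integrals along it, $\cEv_{\vph}(w)=\bigl(\ell_w^{(u)}\circ\square\bigr)(\phi)$, lies in $\ker D_y\bigr|_{\cE}$; I also use the standing hypothesis that the minimal integrals $w$ are differential\/-\/functionally independent.

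For the first stage I would argue as follows. By the closure hypothesis the construction~\eqref{EqOplusB} is available, and $\bigl[\square(\phi'),\square(\phi'')\bigr]=\square\bigl(\ib{\phi'}{\phi''}{\square}\bigr)$ with $\ib{\phi'}{\phi''}{\square}\in\Omega(\xi_\pi)$. As the domain of $\square$ consists of tuples $\phi(x,[w])$, this class is represented by functions of $x$ and $[w]$, so it lies in $\ker D_y\bigr|_{\cE}$ (any function of $[w]$ is annihilated by $D_y$ on $\cE$). The same conclusion may be read off the decomposition~\eqref{EqOplusBKoszul}, whose two standard summands are values of $D_x$\/-\/operators with $[w]$\/-\/coefficients applied to the velocities $\cEv_{\square(\phi')}(w)\in\ker D_y\bigr|_{\cE}$; the remaining bilinear term then lies in $\ker D_y\bigr|_{\cE}$ as well, being the difference of the full bracket and these summands.

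The second stage is the heart of the matter. Since the Lie bracket of evolutionary fields is a local bilinear bidifferential operation, I would write the bracket as
\[
\ib{\phi'}{\phi''}{\square}^{\,k}=\sum_{\sigma,\tau}\sum_{i,j}
   c^{\,k}_{\sigma\tau ij}\,D_\sigma(\phi'_i)\,D_\tau(\phi''_j),
\]
where the coefficients $c^{\,k}_{\sigma\tau ij}$ are a priori differential functions on $\cE$, assembled from the $[u]$\/-\/coefficients of $\square$ and their derivatives. Applying $D_y$, using its permutability with every $D_\sigma$ and the relations $D_y(\phi'_i)\doteq0$, $D_y(\phi''_j)\doteq0$, annihilates all the argument factors and leaves
\[
0\doteq\sum_{\sigma,\tau}\sum_{i,j}D_y\bigl(c^{\,k}_{\sigma\tau ij}\bigr)\,
   D_\sigma(\phi'_i)\,D_\tau(\phi''_j)\qquad\text{on $\cE$.}
\]
Here I invoke the differential\/-\/functional independence of the integrals to treat $[w]$ as a free jet coordinate system on $J^\infty(\xi)$; this lets me vary $\phi',\phi''$ so that the monomials $D_\sigma(\phi'_i)\,D_\tau(\phi''_j)$ become independent, whence the fundamental\/-\/lemma reasoning yields $D_y\bigl(c^{\,k}_{\sigma\tau ij}\bigr)\doteq0$, that is $c^{\,k}_{\sigma\tau ij}\in\ker D_y\bigr|_{\cE}$, as claimed.

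I expect the second stage to be the only genuine obstacle. Two points require attention. First, the peeling is carried out modulo the differential ideal of $\cE$, so one must check that passing to $\cE$ preserves the independence of the monomials $D_\sigma(\phi'_i)\,D_\tau(\phi''_j)$ used to isolate the coefficients; this is precisely what minimality and differential\/-\/functional independence of $w$ secure. Second, the bilinear part of~\eqref{EqOplusBKoszul} is defined only modulo $\ker\square$, so the coefficients $c^{\,k}_{\sigma\tau ij}$ are not canonical; one fixes the representative furnished by~\eqref{EqOplusBKoszul} and observes that the residual ambiguity, being itself represented by functions of $[w]$, cannot spoil the membership $c^{\,k}_{\sigma\tau ij}\in\ker D_y\bigr|_{\cE}$.
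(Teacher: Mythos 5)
Your argument is correct, and its engine is exactly the engine of the paper's own proof: write the bilinear part of the bracket as a bidifferential expression in $\phi',\phi''$, apply $D_y$, use the permutability with $D_\sigma$ together with $D_y(\phi')\doteq0$, $D_y(\phi'')\doteq0$ to leave only the terms where $D_y$ hits the coefficients, and then exploit the arbitrariness of $\phi',\phi''$ to isolate each coefficient. The routes into that final step differ slightly. The paper applies the composition $D_y\circ\ell_w^{(u)}\circ\square$ to the identity \eqref{EqOplusBKoszul} and lets hypothesis \eqref{OpEq} annihilate both the two standard summands and the $\ker\square$-ambiguity in one stroke, arriving at $\bigl(\ell_w^{(u)}\circ\square\bigr)\bigl(D_y\bigl(\ib{\phi'}{\phi''}{\square}\bigr)\bigr)\doteq0$; you instead derive $D_y\bigl(\ib{\phi'}{\phi''}{\square}\bigr)\doteq0$ directly from the closure hypothesis and the fact that the domain of $\square$ consists of tuples $\phi(x,[w])$. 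Your version has two advantages: it spells out the separation of monomials that the paper compresses into the phrase ``for the same reason,'' and it avoids having to strip the operator $\ell_w^{(u)}\circ\square$ off at the very end (a step the paper leaves implicit). The price is that you must argue separately, as you do in your closing remark, that the specific bidifferential representative of the bracket --- and not merely its class modulo $\ker\square$ --- takes $D_y$-closed values; the paper's device of keeping $\ell_w^{(u)}\circ\square$ in front until the last moment is precisely what makes its computation insensitive to that choice of representative. Both arguments sit at the same level of rigor, so this is a legitimate, mildly cleaner repackaging rather than a gap.
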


\begin{proof}
By assumption, we have that $\bigl(D_y\circ\ell_w^{(u)}\circ\square\bigr)
\bigl([\phi',\phi'']_\square\bigr)\doteq0$ for all
$\phi',\phi''(x,[w])$. 
This equals
\[
0\doteq\bigl(D_y\circ\underline{\ell_w^{(u)}\circ\square}\bigr)
\Bigl(\cEv_{\square(\phi')}(\phi'')-\cEv_{\square(\phi'')}(\phi')+
\ib{\phi'}{\phi''}{\square}\Bigr)
\doteq\bigl(\ell_w^{(u)}\circ\square\bigr)\bigl(D_y
\bigl(\ib{\phi'}{\phi''}{\square}\bigr)\bigr),
\] 
because the underlined composition satisfies \eqref{OpEq}. Clearly,
$D_y(\phi')$ and $D_y(\phi'')$ vanish on $\cE$ for arbitrary
$\phi',\phi''$. 
For the same reason, not only the whole bracket
$\ib{\phi'}{\phi''}{\square}$, but each particular coefficient
standing at the bilinear terms in it lies in $\ker D_y\bigr|_{\cE}$.
\end{proof}

\begin{example}\label{ExLiouE}
Consider the 
parametric extension of the scalar Liouville equation, 
\begin{equation}\label{LiouE}
\cE(\veps)=\bigl\{
u_{xy}=\exp(2u)\cdot\sqrt{1+4\veps^2u_x^2}\bigr\},\qquad
\veps\in\BBR.
\end{equation}
This equation is ambient w.r.t.\ the hierarchy of Gardner's
deformation of the potential modified KdV equation,
see \cite{TMPhGardner}.
The contraction $\cU=\cU(\veps,[u(\veps)])$
from \eqref{LiouE} to the
non\/-\/extended equation $\cU_{xy}=\exp(2\cU)$ is
$\cU=u+\tfrac{1}{2}\arcsinh(2\veps u_x)$; it
determines the third order integral
for \eqref{LiouE} using the integral $w=\cU_x^2-\cU_{xx}$ at $\veps=0$.
However, the regularized (at $\veps=0$) 
integral of order two for \eqref{LiouE} is
\begin{equation}\label{MinIntLiouE}
w=\bigl(1-\sqrt{1+4\veps^2u_x^2}\bigr)\bigr/{2\veps^2}
   +u_{xx}\bigr/\sqrt{1+4\veps^2u_x^2}.
\end{equation}
The second integral for \eqref{LiouE} is
$\overline{w}=u_{yy}-u_y^2-\veps^2\cdot\exp(4u)\in\ker D_x\bigr|_{\cE(\veps)}$.
The operators $\bar{\square}=u_y+\tfrac{1}{2}D_y$ and
\begin{equation}\label{SquareE}
\square=
\tfrac{1}{2}(1+4 \veps^2 u_x^2-2 \veps^2 u_{xx})\cdot D_x+
u_x+4 \veps^2 u_x^3-2 \veps^2 u_{xxx}+
   \frac{12 \veps^4 u_x u_{xx}^2}{1+4 \veps^2 u_x^2}
\end{equation}
assign symmetries 
$\vph=\square\bigl(\phi(x,[w])\bigr)$ and
$\overline{\vph}=\overline{\square}\bigl(\overline{\phi}(y,[\overline{w}])\bigr)$
of \eqref{LiouE} to its integrals.

The images of both operators $\square$ and $\overline{\square}$ 
are Lie subalgebras in $\sym\cE(\veps)$. 
The bracket $\ib{p}{q}{\overline{\square}}=p_yq-pq_y$ for $\overline{\square}$
is familiar \cite{SokolovUMN,TMPhGallipoli}. 
The bracket induced in the domain of $\square$ has the following form:
for any arguments $p,q$, 
we have
\begin{align*} 
{}&\ib{p}{q}{\square} =
  \veps^2\cdot\bigl(p_{xx} q_x-p_x q_{xx}\bigr)
 -2\veps^2\cdot\bigl(p_{xxx} q-p q_{xxx}\bigr)\\
&\quad{}-12 \veps^4\cdot\bigl(8 \veps^2 u_x^3 u_{xx}-4 \veps^2 u_x^2 u_{xxx}+4 \veps^2 u_x u_{xx}^2
   +2 u_x u_{xx}-u_{xxx}\bigr)\\
&\qquad{}\times\bigl[1+8 \veps^2 u_x^2+16 \veps^4 u_x^4-2 \veps^2 u_{xx}
 -8 \veps^4 u_x^2 u_{xx}\bigr]^{-1}
 \cdot\bigl(p_{xx} q-p q_{xx}\bigr)\\
%
{}&\quad{}+\bigl(\underline{1}+288 \veps^4 u_x^4-288 \veps^4 u_x^2 u_{xx}+28 \veps^2 u_x^2-16 \veps^2 u_{xx}-288 \veps^6 u_x u_{xx} u_{xxx}\\
&\qquad{}-96 \veps^6 u_{xx}^3+3072 \veps^{10} u_x^{10}+24 \veps^6 u_{xxx}^2+24 \veps^4 u_{4x}+1408 \veps^6 u_x^6
  +3328 \veps^8 u_x^8\\
&\qquad{}-768 \veps^{10} u_{4x} u_{xx} u_x^4 -384 \veps^8 u_{4x} u_x^2 u_{xx}
  -2304 \veps^8 u_x^3 u_{xx} u_{xxx}+384 \veps^8 u_{xx}^2 u_x u_{xxx}\\
&\qquad{}-4608 \veps^{10} u_x^5 u_{xx} u_{xxx}+16 \veps^4 u_{xx}^2-5632 \veps^8 u_x^6 u_{xx}
  -1920 \veps^6 u_{xx} u_x^4 +3328 \veps^8 u_x^4 u_{xx}^2\\
&\qquad{}+512 \veps^6 u_{xx}^2 u_x^2+384 \veps^{10} u_x^4 u_{xxx}^2
  -960 \veps^{10} u_{xx}^4 u_x^2-48 \veps^4 u_x u_{xxx}-3072 \veps^{10} u_x^7 u_{xxx}\\
&\qquad{}+3072 \veps^{10} u_{xx}^3 u_x^4
  -2304 \veps^8 u_x^5 u_{xxx}-576 \veps^6 u_x^3 u_{xxx}+288 \veps^6 u_{4x} u_x^2
  +384 \veps^8 u_x^2 u_{xx}^3\\
&\qquad{}+6144 \veps^{10} u_{xx}^2 u_x^6-6144 \veps^{10} u_{xx} u_x^8+1152 \veps^8 u_{4x} u_x^4
  +1536 \veps^{10} u_{4x} u_x^6
  +192 \veps^8 u_{xxx}^2 u_x^2\\
&\qquad{}+240 \veps^8 u_{xx}^4+1536 \veps^{10} u_{xx}^2 u_x^3 u_{xxx}
  -48 \veps^6 u_{4x} u_{xx}\bigr)\\
&\qquad{}\times\bigl[\underline{1}+96 \veps^4 u_x^4+256 \veps^6 u_x^6
  +256 \veps^8 u_x^8+4 \veps^4 u_{xx}^2-48 \veps^4 u_x^2 u_{xx}
  +32 \veps^6 u_{xx}^2 u_x^2\\
&\qquad\quad{}-4 \veps^2 u_{xx}-256 \veps^8 u_x^6 u_{xx}+64 \veps^8 u_x^4 u_{xx}^2
  -192 \veps^6 u_{xx} u_x^4+16 \veps^2 u_x^2\bigr]^{-1}\cdot\bigl(p_x
q-p q_x\bigr).
\end{align*}
The two underlined units correspond to the bracket $p_xq-pq_x$ on the
domain of the operator $\square=\cU_x+\tfrac{1}{2}D_x$ that 
provides symmetries of the Liouville equation $\cU_{xy}=\exp(2\cU)$
at $\veps=0$. In agreement with Lemma \ref{LCoeffInKernel},
the non\/-\/constant coefficients of bilinear terms
$p_{xx}q-pq_{xx}$ and $p_xq-pq_x$ in $\ib{p}{q}{\square}$
belong to $\smash{\ker D_y\bigr|_{\cE(\veps)}}$.
It is remarkable that, since the entire construction 
(\ref{LiouE}--\ref{SquareE}) contains formal power series $u(\varepsilon)$
in $\varepsilon$,
so are these two rational functions: an attempt to express their dependence 
on $[w]$ leads to formal series with unbounded growth of the differential
orders of its coefficients.
\end{example}

\paragraph*{Discussion.} 
The matrix operators $\square=\bigl(\square^{i,j}$,\ $1\leq i\leq m$,\ %
$1\leq j\leq r\bigr)$ given by \eqref{Square} are generalizations of tensors
of type $(2,0)$ in the geometry of infinite jet bundles. We define the operators by using the two unrelated groups of differential reparametrizations for the coordinates in the domains and images, respectively.
Furthermore, the operators $\square$ for the Liouville\/-\/type systems $\cEL$
generalize the theory of Hamiltonian structures as follows: they map variational covectors for one equation (we recall that $\sym\cE_\varnothing\supset\gA$) to symmetries of the other system $\cEL$ 
(such that $\sym\cEL\supset\gB$).

Unlike in \cite{Demskoi2004,SokStar}, we do not attempt to solve 
equation \eqref{OpEq} upon~$\square$. 
On the contrary, we define the operators \eqref{Square}
by a geometric reasoning. Thence, first, we obtain the Hamiltonian operators 
$\smash{\hat{A}_k=\ell^{(u)}_w}\circ\square$ for the KdV\/-\/type hierarchies
on Euler\/--\/Lagrange systems of Liouville type 
\cite{DSViniti84,TMPhGallipoli} and, second,
we prove that the images of the operators~$\square$ are involutive. 
In other words, we describe a direct algorithm aimed at constructing 
completely integrable equations.

Formulas \eqref{Square} and \eqref{Quattro} prescribe the differential order
of $\smash{\hat{A}_k}$. 
Estimates for the orders of the integrals $w$ for the 2D~Toda chains associated with semi\/-\/simple complex Lie algebras $\gothg$ 
are known from \cite{ShabatYamilov}, see Example \ref{ExTodaAreLiou},
and were reformulated 
in \cite{
GurievaZhiber,
Protaras}. 
The \emph{upper}
bound, that the numbers $\ord_x w^i-1$ are not greater than the exponents
for $\gothg$, is proved by verifying (via Schur polynomials) Serre's relations
$(\ad Y_i)^{-K^i_{\,j}+1}(Y_j)=0$, $i\neq j$, for the generators
\[Y_i=\sum\nolimits_{k\geq0}\exp\Bigl(-\sum\nolimits_{j=1}^m K^i_{\,j}u^j\Bigr)\cdot D_x^k\Bigl(\exp
\Bigl(\sum\nolimits_{j'=1}^m K^i_{\,j'}u^{j'}\Bigr)\Bigr)\cdot \dd/\dd u^i_{k+1}\]
of the characteristic Lie algebras (see \cite{LeznovSmirnovShabat,ShabatYamilov,Shabat95} and also \cite{Protaras}),
and by using Frobenius theorem.
The fact that the vector fields $Y_i$ 
coincide 
with the Chevalley 
generators $\mathfrak{f}_i$ of the semi\/-\/simple Lie al\-ge\-bra $\gothg$
is important here.
The same estimate from \emph{below} 
follows from the absence of relations other than Serre's for the
generators $Y_i$. 
This was established in \cite[p.21]{ShabatYamilov}
for the root systems $\mathsf{A}_n$ and $\mathsf{D}_n$ 
by listing the linear independent 
nonzero iterated commutators.

{
\paragraph*{Acknowledgements.}
The authors thank B.\,A.\,Dub\-ro\-vin, E.\,V.\,Ferapontov,
I.\,S.\,Kra\-sil'\-sh\-chik, 
P.\,J.\,Ol\-ver, and V.\,V.\,Sokolov
for 
discussions and remarks. 
The authors acknowledge helpful advice of the referee.
A.\,K. is grateful to the organizing committee of the International workshop
`Nonlinear Physics: Theory and Experiment V' for support and warm hospitality.
This work has been partially supported by the European Union through
the FP6 Marie Curie RTN \emph{ENIGMA} (Contract
no.\,MRTN-CT-2004-5652), the European Science Foundation Program
{MISGAM}, and by NWO grants B61--609 and VENI 639.031.623.
A part of this research was done while A.\,K.\ was visiting
at the $\smash{\text{IH\'ES}}$, SISSA, and~CRM (Mont\-r\'eal), whose financial support is gratefully acknowledged.

}

\end{document}